 \newtheorem{theorem}{Theorem}[section]
 \newtheorem{lemma}[theorem]{Lemma}
 \newtheorem{corollary}[theorem]{Corollary}
 \newtheorem{problem}[theorem]{Problem}
 \theoremstyle{definition}
 \newtheorem{definition}[theorem]{Definition}
 \theoremstyle{remark}
 \numberwithin{equation}{section}
\DeclareMathOperator*{\sq}{\Box}
\newcommand{\eqcl}{\mathrel{\mathsmaller{\mathsmaller{^{\boldsymbol{\sqsubseteq}}}}}}
\newcommand{\la}{\langle}
\newcommand{\ra}{\rangle}
\newcommand{\R}{\mathfrak d}
\begin{document}
\sloppy
%
%
%
%
%
%
%
%
%

\title[Partial Star Products]
 {Partial Star Products: A Local Covering Approach for the Recognition of Approximate Cartesian Product Graphs}
\author{Marc Hellmuth}

\address{%
Center for Bioinformatics \\
Saarland University \\
Building E 2.1, Room 413 \\
P.O. Box 15 11 50 \\
D - 66041 Saarbr\"{u}cken \\
Germany }

\email{marc.hellmuth@bioinf.uni-sb.de}

\author{Wilfried Imrich}
\address{Chair of Applied Mathematics\\
Montanuniversit{\"a}t,  A-8700
Leoben, \\Austria}
\email{imrich@unileoben.ac.at}

\author{Tomas Kupka}
\address{ Chair of Applied Mathematics\\
Montanuniversit{\"a}t,  A-8700
Leoben, \\Austria}
\email{tomas.kupka@teradata.com}

\thanks{This work was financed in part by ARRS Slovenia and the Deutsche
Forschungsgemeinschaft (DFG) within the EUROCORES
Programme EuroGIGA (project GReGAS) of the European Science Foundation. The
Austrian participation in GReGAS is \emph{not} supported by the
\emph{Austrian Science Fonddation (FWF)}.}

\subjclass{Primary 68R10; Secondary 05C85}

\keywords{Cartesian product, approximate product, partial star product, product relation}

\date{\today}

\dedicatory{}

\begin{abstract}
 This paper is concerned with the recognition of
 approximate graph products with respect to the Cartesian
 product. Most graphs are prime, although they can
 have a rich product-like structure. The proposed algorithms
 are based on a local approach that covers a graph by small
 subgraphs, so-called partial star products, and then utilizes
 this information to derive the global factors and an embedding
 of the graph under investigation into Cartesian product graphs.
\end{abstract}

\maketitle

\section{Introduction}

This contribution is concerned with the recognition of approximate products
with respect to the Cartesian product. It is well-known that graphs with
a non-trivial product structure can be recognized in linear
time in the number of edges for Cartesian product graphs \cite{ImPe07}.
Unfortunately, the application of the ``classical'' factorization
algorithms is strictly limited, since almost all graphs are prime, i.e.,
they do not have a non-trivial product structure although they can have a
product-like structure. In fact, even a very small perturbation, such as
the deletion or insertion of a single edge, can destroy the product
structure completely, modifying a product graph to a prime graph
\cite{Fei-86,Zmazek:07}. Hence, an often appearing problem can be
formulated as follows: For a given graph $G$ that has a product-like
structure, the task is to find a graph $H$ that is a non-trivial product
and a good approximation of $G$, in the sense that $H$ can be reached from
$G$ by a small number of additions or deletions of edges and vertices. The
graph $G$ is also called \emph{approximate} product graph.

The recognition of approximate products has been investigated by several
authors, see e.g. \cite{FEHA-89, HIKS-08, HIKS-09, hellmuth2011local,
IZ-96, Zmazek:07,IPZ-97, Zer00, ZZ-02, HZ-99, hos12}. In \cite{IZ-96} and
\cite{Zmazek:07} the authors showed that Cartesian and strong product
graphs can be uniquely reconstructed from each of its one-vertex-deleted
subgraphs. Moreover, in \cite{IZZ-01} it is shown that $k$-vertex-deleted
Cartesian product graphs can be uniquely reconstructed if they have at
least $k+1$ factors and each factor has more than $k$ vertices. In
\cite{IPZ-97, Zer00, ZZ-02} algorithms for the recognition of so-called
graph bundles are provided. Graph bundles generalize the notion of graph
products and can also be considered as a special class of approximate
products. Equivalence relations on the edge set of a graph $G$ that satisfy
restrictive conditions on chordless squares play a crucial role in the
theory of Cartesian graph products and graph bundles. In \cite{hos12} the
authors showed that such relations in a natural way induce equitable
partitions on the vertex set of $G$, which in turn give rise to quotient
graphs that can have a rich product structure even if $G$ itself is prime.
However, Feigenbaum and Haddad proved that the following problem
is NP-complete

\begin{problem}[ \cite{FEHA-89}]
 To a given connected prime graph $G$ find a connected Cartesian product
 $G_1 \Box \dots \Box G_k$ with the same number of vertices as $G$, such
 that $G$ can be obtained from $G_1 \Box \dots \Box G_k$ by adding a
 minimum number of edges only or deleting a minimum number of edges only.
 \label{prob:optimum}
\end{problem}

Hence, in order to solve this problem not only for special classes of
graphs but also for general cases one should provide heuristics that can be
used in order to solve the problem of finding ``optimal'' approximate
products. A systematic investigation into approximate product graphs w.r.t.
the strong product showed that a practically viable approach can be based
on \emph{local} factorization algorithms, that cover a graph by
factorizable small patches and attempt to stepwisely extend regions with
product structures \cite{HIKS-08, HIKS-09, hellmuth2011local}. In the case
of strong product graphs, one benefits from the fact that the local product
structure of induced neighborhoods is a refinement of the global factors
\cite{hellmuth2011local}. However, the problem of finding factorizable
small patches in Cartesian products becomes a bit more complicated, since
induced neighborhoods are not factorizable in general. In order to develop
a heuristic, based on factorizable subgraphs and local coverings which in
turn can be used to factorize large parts of the possibly disturbed graph
we introduce the so-called partial star product (PSP). The partial star
product is, besides trivial cases such as squares, one of the smallest
non-trivial subgraphs that can be isometrically embedded into the product
of so-called stars, even if the respective induced neighborhoods are prime.
Considering a subset of all partial star products of a graph, we
propose in this contribution several algorithms to compute so-called
product colorings and coordinatizations of the subgraph induced by the
partial star products. This information can then be used to embed large
parts of a (possibly) prime graph into a Cartesian product.

We thus present a heuristic algorithm that computes a product 
that \emph{differs as little as possible} from a given graph $G$ and \emph{retains 
as much as possible} of the inherent product structure of $G$. This approach is markedly different 
from the approach of Graham and Winkler \cite{GW:85},
who present a deterministic algorithm that embeds any given, connected graph $G$
isometrically into a Cartesian product $H$. The embedding also
has the remarkable property that any automorphism of $G$ is extends to an automorphisms of $H$. 
Nonetheless, from our point of view, their approach has the disadvantage that $H$ may be exorbitantly large.
For example, if $G$ is a tree on $m$ edges, then the graph $H$ computed by \cite{GW:85} has
$2^m$ vertices.

This contribution is organized as follows. We begin with an introduction
into necessary preliminaries and continue to define the partial star
product. We proceed to give basic properties of the partial star product
and concepts of product relations based on PSP's. These results are then
used to develop algorithms and heuristics that compute (partial)
factorizations of given (un)disturbed graphs.

\section{Preliminaries}

\subsection{Basic Notation}
We consider finite, simple, connected and undirected graphs $G=(V,E)$ with
vertex set $V(G)=V$ and edge set $E(G)=E$. A map $\gamma:V(H)\rightarrow
V(G)$ such that $(x,y) \in E(H)$ implies $(\gamma(x),\gamma(y)) \in E(G)$
for all $x,y \in V(G)$ is a \emph{homomorphism}.
An injective homomorphism $\gamma:V(H)\rightarrow V(G)$
is called  \emph{embedding of $H$ into $G$}.  
We call two graphs $G$ and $H$ \emph{isomorphic}, and write
$G\simeq H$, if there exists a bijective homomorphism $\gamma$ whose
inverse function is also a homomorphism. Such a map $\gamma$ is called an
\emph{isomorphism}.

For two graphs $G$ and $H$ we write $G\cup H$ for the graph $(V(G)\dot\cup
V(H), E(G)\dot\cup E(H))$, where $\dot\cup$ denotes the disjoint union.
The \emph{distance} $d_G(x,y)$ in $G$ is defined
as the number of edges of a shortest path connecting the two vertices
$x,y \in V(G)$.
A graph $H$ is a \emph{subgraph} of a graph $G$, in symbols
$H\subseteq G$, if $V(H)\subseteq V(G)$ and $E(H)\subseteq E(G)$. A
subgraph $H\subseteq G$ is \emph{isometric} if $d_H(x,y)=d_G(x,y)$ for all $x,y\in
V(H)$. For given graphs $G$ and $H$ the embedding $\gamma:V(H)\rightarrow
V(G)$ is an \emph{isometric embedding} if $d_H(u,v) =
d_G(\gamma(u),\gamma(v))$ for all $u,v \in V(G)$. For simplicity, in such case we
also call $H$ isometric subgraph of $G$. If $H\subseteq G$ and all pairs
of adjacent vertices in $G$ are also adjacent in $H$ then $H$ is called an
\emph{induced} subgraph. The subgraph of a graph $G$ that is induced by a
vertex set $W \subseteq V(G)$ is denoted by $\langle W \rangle$. An induced
cycle on four vertices is called \emph{chordless square}. Let the edges
$e=(v,u)$ and $f=(v,w)$ span a chordless square $\la\{v,u,x,w\}\ra$. Then
$f$ is the \emph{opposite} edge of $(x,u)$. The vertex $x$ is called
\emph{top vertex} (w.r.t. the square spanned by $e$ and $f$). A top vertex
$x$ is \emph{unique} if $|N[x] \cap N[v]| = 2$. In other words, a top
vertex $x$ is not unique if there are further squares with top vertex $x$
spanned by the edges $e$ or $f$ together with a third distinct edge $g$.

We define the \emph{open $k$-neighborhood} of a vertex $v$ as the set
$N_k(v) = \{x\in V(G)\mid 0<d_G(v,x) \leq k\}$. The \emph{closed}
$k$-neighborhood is defined as $N_k[v] = N_k(v)\cup \{v\}$. Unless there is a
risk of confusion, an open or closed $k$-neighborhood is just called
$k$-neighborhood and a $1$-neighborhood just neighborhood and we write $N(v)$,
resp. $N[v]$ instead of $N_1(v)$, resp. $N_1[v]$. To avoid ambiguity, we
sometimes write $N_k^G(v)$, resp. $N_k^G[v]$ to indicate that $N_k(v)$,
resp. $N_k[v]$ is taken with respect to $G$.

The \emph{degree} of a vertex $v$ is defined as the cardinality $|N(v)|$.
A \emph{star} $G=(V,E)$ is a connected acyclic graph such that there is a vertex $x$
that has degree $|V|-1$ and the other $|V|-1$ vertices have degree $1$. We
call $x$ the \emph{star-center} of $G$.

\subsection{Product and Approximate Product Graphs}
The Cartesian product $G\Box H$ has vertex set $V(G\Box H)=V(G)\times
V(H)$; two vertices $(g_1,h_1)$, $(g_2,h_2)$ are adjacent in $G\Box H$ if
$(g_1,g_2)\in E(G)$ and $h_1=h_2$, or $(h_1,h_2)\in E(G_2)$ and $g_1 =
g_2$. The one-vertex complete graph $K_1$ serves as a unit, as $K_1 \Box H
\simeq H$ for all graphs $H$. A Cartesian product $G\Box H$ is called
\emph{trivial} if $G \simeq K_1$ or $H \simeq K_1$. A graph $G$ is
\emph{prime} with respect to the Cartesian product if it has only a trivial
Cartesian product representation. A representation of a graph $G$ as a
product $G_1\sq G_2\sq \cdots \sq G_k$ of prime graphs is called a {\it
prime factor decomposition (PFD)} of $G$.

\begin{theorem}[\cite{Sa60, ImPe07}]
Any finite connected graph $G$ has a unique PFD with respect to the
Cartesian product up to the order and isomorphisms of the factors. The PFD
can be computed in linear time in the number of edges of $G$.
\label{thm:upfd}
\end{theorem}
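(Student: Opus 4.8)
\emph{Proof strategy.} The plan is to prove uniqueness and the algorithmic statement separately, since they rely on different ideas; uniqueness is the classical Sabidussi--Vizing argument, the linear bound is the contribution of \cite{ImPe07}. For uniqueness I would first isolate the genuinely intrinsic object, namely the \emph{product coloring} of $E(G)$: given any representation $G\simeq G_1\Box\cdots\Box G_k$, color an edge by the unique coordinate in which its two endpoints differ. The heart of the matter is to show this coloring does not depend on the chosen representation. To this end I would establish the standard structural facts for a product coloring: (i) two edges that are opposite edges of a chordless square receive the same color, and more generally any two edges related by the Djokovi\'c--Winkler relation $\Theta$ receive the same color; (ii) all edges of a common triangle receive the same color; (iii) the \emph{square property}: two adjacent edges of distinct colors span exactly one chordless square, and the two remaining edges of that square carry the matching colors; (iv) each \emph{layer}, i.e.\ each connected component of a monochromatic spanning subgraph, is a convex (hence isometric) subgraph of $G$. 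One then argues that the product coloring is the \emph{finest} edge coloring satisfying (i)--(iv); equivalently, it is obtained from the partition induced by $\Theta^*$ (the transitive closure of $\Theta$) by performing the merges forced by (iii) and (iv). Since $\Theta^*$ and the forced merges are defined purely in terms of $G$, the product coloring is an invariant of $G$ up to renaming colors.

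Next I would recover the factors from the coloring. Fix a base vertex $v_0$. For each color $i$, let $F_i=\la W_i\ra$ be the layer through $v_0$ in color $i$; using (i)--(iv) one shows the map sending a vertex $x$ to the tuple of its ``coordinates'' — where the $i$-th coordinate of $x$ is obtained by projecting $x$ onto $F_i$ along a shortest path that changes one color at a time — is a well-defined isomorphism $G\to F_1\Box\cdots\Box F_k$. Each $F_i$ is isomorphic to the factor contributing color $i$ in any representation inducing that coloring, and $F_i$ is prime exactly when no further merge is forced. Consequently two prime factor decompositions of $G$ induce the same color classes, hence the same multiset of layers, hence isomorphic prime factors in the same order up to permutation. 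This yields the uniqueness assertion.

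For the algorithmic statement I would proceed constructively along the same lines, the only issue being the running time. The steps: (a) run a BFS from a root to obtain a BFS level structure and enough distance information; (b) enumerate the relevant local gadgets — neighborhoods, triangles, and chordless squares — in total time $O(|E(G)|)$, which is possible because an input admitting a nontrivial Cartesian representation has only linearly many such gadgets and, conversely, violations of this bound can be detected within the same budget and reported (the input is then prime, or the process stops at the finest valid product coloring); (c) maintain the evolving edge partition in a union--find structure, initialized from the triangles and from the opposite-edge pairs of chordless squares, and then repeatedly apply the merges forced by the square property and by convexity failures, propagating the information along BFS levels; (d) once the color classes stabilize, read off the factor graphs as the layers through the root. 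Correctness is exactly the content of the two previous paragraphs; what remains is the amortized analysis showing that each vertex, edge and gadget is processed only a bounded number of times, so the whole procedure runs in $O(|E(G)|)$.

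The main obstacle is precisely this last point. The uniqueness half is routine once the square property and the convexity of layers are set up carefully; it is essentially bookkeeping with chordless squares. By contrast, a naive refinement of the edge partition by the square rule costs $\Omega(|E(G)|\log|V(G)|)$ or more, so linearity requires the structural insight — the ``Cartesian skeleton'' viewpoint of \cite{ImPe07} — that organizes the forced merges so the total work is governed by the linear-size family of chordless squares, combined with a tight amortized union--find argument and a careful treatment of inputs that turn out to be prime or only partially decomposable within the same time budget.
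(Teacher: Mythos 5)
The paper does not prove this statement at all: Theorem~\ref{thm:upfd} is quoted from \cite{Sa60} and \cite{ImPe07}, so there is no in-paper argument to compare yours against. Judged on its own, your outline follows the standard route from the literature (cf.\ \cite{Imrich:94,Hammack:2011a}): define the product coloring, show it is the finest equivalence relation on $E(G)$ that contains the square/triangle-forced identifications and has convex layers --- i.e.\ the convex hull $\mathfrak{C}(\delta)$ of the relation $\delta$ of Definition~\ref{def:delta} --- and then reconstruct the factors as the layers through a base vertex. Your use of the Djokovi\'c--Winkler relation $\Theta$ in place of $\delta$ is harmless, since the merges you force via the square property recover the pairs of adjacent edges spanning no (or no unique) chordless square, which is exactly where $\delta$ exceeds $\Theta$.

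That said, what you have is a strategy, not a proof, and the two places where you wave your hands are precisely the two theorems being cited. First, the claim that the finest coloring satisfying (i)--(iv) induces an isomorphism $G\to F_1\Box\cdots\Box F_k$ (well-definedness of the coordinate map, surjectivity, and that no further merge is needed) is the entire content of Sabidussi's theorem; ``using (i)--(iv) one shows'' is the step that needs the actual induction on distance from $v_0$ and the convexity of layers, and nothing in your sketch rules out that the finest such coloring fails to be a product coloring for some graph. Second, for the linear-time bound you correctly identify the obstacle (naive refinement costs more than $O(|E|)$) and then defer its resolution to \cite{ImPe07}; the assertion that a nontrivially factorizable graph has only linearly many triangles and chordless squares, and that violations can be detected within the same budget, is itself a nontrivial lemma that you would have to prove. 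So the proposal is a faithful map of the known proof, but both of its load-bearing steps are asserted rather than established.
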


The Cartesian product is commutative and associative. It is well-known that
a vertex $x$ of a Cartesian product $\Box_{i=1}^n G_i$ is properly
``coordinatized'' by the vector $c(x) := (c_1(x),\dots,c_n(x))$ whose
entries are the vertices $c_i(x)$ of its factor graphs $G_i$
\cite{Hammack:2011a}. Two adjacent vertices in a Cartesian product graph
therefore differ in exactly one coordinate. Note, the coordinatization of a
product is equivalent to an edge coloring of $G$ in which edges $(x,y)$
share the same color $c_k$ if $x$ and $y$ differ in the coordinate $k$.
This colors the edges of $G$ (with respect to the \emph{given} product
representation). It follows that for each color $c$ the set $E^c=\{e\in
E(G) \mid c(e)=c\}$ of edges with color $c$ spans $G$. The connected
components of $\langle E^c\rangle$, usually called the \emph{layers} or
\emph{fibers} of $G$, are isomorphic subgraphs of $G$. A \emph{partial
product} $H\subseteq G$ is an isometric subgraph of a (not necessarily
non-trivial) Cartesian product graph $G$.

For later reference, we state the next two well-known lemmas.

\begin{lemma}[Distance Lemma, \cite{IMKL-00}]
  \label{prop:distlemma}
 Let $x=(x_G,x_H)$ and $y=(y_G,y_H)$ be arbitrary vertices of the
  Cartesian product of $G\Box H$. Then
  \[
  d_{G\Box H}(x,y) = d_G(x_G,y_G) + d_H(x_H,y_H)\;.
  \]
\end{lemma}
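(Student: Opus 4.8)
The plan is to establish the two inequalities separately, the ``$\le$'' direction by exhibiting a short walk and the ``$\ge$'' direction by a projection argument.

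For $d_{G\Box H}(x,y)\le d_G(x_G,y_G)+d_H(x_H,y_H)$, I would first fix a shortest path $x_G=a_0,a_1,\dots,a_p=y_G$ in $G$ with $p=d_G(x_G,y_G)$ and a shortest path $x_H=b_0,b_1,\dots,b_q=y_H$ in $H$ with $q=d_H(x_H,y_H)$. By the definition of the Cartesian product, the vertices $(a_0,x_H),(a_1,x_H),\dots,(a_p,x_H)=(y_G,x_H)$ form a path in $G\Box H$, since consecutive vertices agree in the $H$-coordinate and are adjacent in the $G$-coordinate; likewise $(y_G,b_0),(y_G,b_1),\dots,(y_G,b_q)=(y_G,y_H)$ is a path in $G\Box H$. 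Concatenating them at the common vertex $(y_G,x_H)$ yields a walk from $x$ to $y$ of length $p+q$, whence $d_{G\Box H}(x,y)\le p+q$.

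For the reverse inequality, let $W=(w_0,w_1,\dots,w_\ell)$ be a shortest path from $x$ to $y$ in $G\Box H$ with $\ell=d_{G\Box H}(x,y)$, and write $w_i=(w_i^G,w_i^H)$. By the definition of adjacency in $G\Box H$, each edge $(w_{i-1},w_i)$ of $W$ is either an \emph{$H$-edge} (with $w_{i-1}^G=w_i^G$ and $(w_{i-1}^H,w_i^H)\in E(H)$) or a \emph{$G$-edge} (with $w_{i-1}^H=w_i^H$ and $(w_{i-1}^G,w_i^G)\in E(G)$); let $\ell_G$ and $\ell_H$ be their respective numbers, so $\ell=\ell_G+\ell_H$. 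Projecting onto the first factor, the sequence $w_0^G,w_1^G,\dots,w_\ell^G$ repeats a vertex exactly across each $H$-edge, so deleting those repetitions leaves a walk in $G$ from $x_G$ to $y_G$ of length $\ell_G$; hence $\ell_G\ge d_G(x_G,y_G)$. Symmetrically $\ell_H\ge d_H(x_H,y_H)$, and adding the two bounds gives $\ell\ge d_G(x_G,y_G)+d_H(x_H,y_H)$.

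The argument is essentially routine, and the only point deserving care is the projection step: one must observe that projecting a path of $G\Box H$ onto a factor collapses every ``wrong-direction'' edge to a single vertex and therefore produces a genuine walk in that factor whose length equals the number of ``right-direction'' edges, together with the elementary fact that an $x_G$--$y_G$ walk of length $m$ forces $d_G(x_G,y_G)\le m$. With this in hand, both inequalities combine to give the claimed identity.
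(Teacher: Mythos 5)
Your proof is correct and complete: the upper bound by concatenating a $G$-path at fixed $H$-coordinate with an $H$-path at fixed $G$-coordinate, and the lower bound by classifying the edges of a shortest path as $G$-edges or $H$-edges and projecting onto each factor, is the standard argument. The paper itself offers no proof of this lemma --- it is stated as a known result and cited from the literature --- so there is nothing to compare against; your write-up would serve as a self-contained substitute for the citation, and the one point you rightly flag (that projection collapses wrong-direction edges to repeated vertices, yielding a genuine walk) is handled correctly.
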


\begin{lemma}[Square Property, \cite{IMKL-00}]
	Let $G=\Box_{i=1}^n G_i$ be a Cartesian product graph and $e=(u,v),f=(u,w) \in
	E(G)$ be two incident edges that are in different fibers. Then there is
	exactly one square in $G$ containing both $e$ and $f$ and this
	square is chordless. 
\end{lemma}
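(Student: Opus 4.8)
The plan is to argue entirely in coordinates, using the canonical coordinatization of $G=\Box_{i=1}^n G_i$ recalled above. Write $u=(u_1,\dots,u_n)$. Two incident edges of the same color lie in a common $G_k$-layer, so incident edges in different fibers necessarily carry different colors; hence $e$ and $f$ correspond to two distinct coordinate directions $i\neq j$. Concretely, there are $v_i'\in V(G_i)$ and $w_j'\in V(G_j)$ with $(u_i,v_i')\in E(G_i)$ and $(u_j,w_j')\in E(G_j)$ such that $v$ differs from $u$ exactly in coordinate $i$ (with $v_i=v_i'$) and $w$ differs from $u$ exactly in coordinate $j$ (with $w_j=w_j'$); in particular $v_i'\neq u_i$ and $w_j'\neq u_j$.

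For existence I would exhibit the top vertex obtained by performing both coordinate moves simultaneously: let $x$ be the vertex agreeing with $u$ in every coordinate except $x_i=v_i'$ and $x_j=w_j'$. Then $x$ and $v$ differ only in coordinate $j$ and $x$ and $w$ differ only in coordinate $i$, so by the definition of the Cartesian product $(v,x),(x,w)\in E(G)$; together with $e$ and $f$ this gives the $4$-cycle $u\,v\,x\,w\,u$, a square through $e$ and $f$. Its four vertices are pairwise distinct: by the Distance Lemma $d_G(u,x)=d_{G_i}(u_i,v_i')+d_{G_j}(u_j,w_j')=2$ and $d_G(v,w)=2$, while $d_G(x,v)=d_G(x,w)=1$.

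Chordlessness is immediate from this computation, since the only candidate chords of the square are $(u,x)$ and $(v,w)$, both of which join vertices at distance $2$. For uniqueness, a second square through $e$ and $f$ must be of the form $u\,v\,z\,w\,u$ for some common neighbour $z\neq u$ of $v$ and $w$, so $d_G(z,v)=d_G(z,w)=1$ and therefore $d_G(z,v)+d_G(z,w)=2=d_G(v,w)$. Expanding all three distances via the Distance Lemma and comparing with the triangle inequality in each factor $G_k$ forces $z_k=u_k$ for $k\notin\{i,j\}$, forces $z_i$ to lie on a $u_i$--$v_i'$ geodesic (hence $z_i\in\{u_i,v_i'\}$, as these are adjacent), and similarly forces $z_j\in\{u_j,w_j'\}$. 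Of the four combinations, three yield $z\in\{u,v,w\}$ — impossible for the fourth vertex of a $4$-cycle — leaving only $z=x$.

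I expect no real obstacle here; once the coordinatization is in place the argument is bookkeeping. The two points meriting a little care are the opening reduction ``incident edges in different fibers $\Rightarrow$ distinct coordinate directions'' and the verification that the four coordinate combinations for a hypothetical second top vertex $z$ are genuinely exhaustive and that all but one of them collapse to $u$, $v$, or $w$.
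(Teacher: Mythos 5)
Your proof is correct and complete: the reduction from ``different fibers'' to ``different coordinate directions'', the explicit top vertex $x$, and the geodesic/triangle-inequality analysis forcing a hypothetical second top vertex $z$ into one of four coordinate combinations are all sound. The paper itself states this lemma only as a cited result from the literature and gives no proof, so there is nothing to compare against; your coordinate-plus-Distance-Lemma argument is essentially the standard one found in the cited source.
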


For more detailed information about product graphs we refer the interested
reader also to \cite{Hammack:2011a,IMKL-00} or \cite{IMKLDO-08}.

For the definition of approximate graph products we defined in
\cite{HIKS-08} the \emph{distance} $d(G,H)$ between two graphs $G$ and $H$
as the smallest integer $k$ such that $G$ and $H$ have representations
$G'$, $H'$, that is vertices in $V(G)$ are identified with vertices
in $V(H)$, for which the sum of the symmetric differences between the
vertex sets of the two graphs and between their edge sets is at most $k$.
That is, if
$$|V(G')\, \triangle\, V(H')|+|E(G')\, \triangle\, E(H')| \leq k.$$

A graph $G$ is a \emph{k-approximate graph product} if there is a non-trivial product
$H$ such that $$d(G,H) \leq k.$$ Here $k$ need not be constant, it can be a
slowly growing function of $|E(G)|$. Moreover, the next results illustrate
the complexity of recognizing approximate graph products.

\begin{lemma}[\cite{HIKS-08}]
\label{lem:k-approx}
For fixed $k$ all Cartesian
$k$-approximate  graph products
can be recognized in polynomial time in $n$.
\end{lemma}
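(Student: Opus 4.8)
The plan is a brute-force enumeration of all graphs lying within ``edit distance'' $k$ of $G$, followed by a call to the linear-time prime factorization algorithm of Theorem~\ref{thm:upfd} on each of them.

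First I would restate the hypothesis combinatorially. Call the insertion or deletion of a single vertex, or the insertion or deletion of a single edge, an \emph{elementary operation}. Unwinding the definition of $d(\cdot,\cdot)$, a graph $G$ is a Cartesian $k$-approximate product if and only if some non-trivial connected Cartesian product $H$ can be transformed into $G$, up to isomorphism, by a sequence of at most $k$ elementary operations: a pair of representations $G',H'$ witnessing $d(G,H)\le k$ differs in at most $k$ vertices and edges, and conversely any such difference is realized by at most $k$ elementary operations. In particular $n-k\le |V(H)|\le n+k$ and at most $k$ vertices of $H$ are ``new'', so one may take $V(H)\subseteq V(G)\cup\{w_1,\dots,w_k\}$ for $k$ fixed auxiliary vertices $w_i$.

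Next I would bound the number of candidate graphs and check each one. An edit sequence of length at most $k$ is specified by choosing, in each of its at most $k$ steps, one of: a vertex of $V(G)$ to delete ($\le n$ choices), an auxiliary vertex to insert ($\le k$ choices, the $w_i$ being interchangeable), an edge of $G$ to delete ($\le\binom{n}{2}$ choices), or a pair of current vertices to join ($\le\binom{n+k}{2}$ choices); hence there are at most $\bigl(O(n^{2})\bigr)^{k}=O(n^{2k})$ pairwise non-isomorphic graphs $H$ with $d(G,H)\le k$, and this list is produced within a polynomial factor of that bound. For every $H$ on the list I would run the algorithm of Theorem~\ref{thm:upfd} to compute its prime factor decomposition and test whether it has at least two factors, i.e.\ whether $H$ is a non-trivial Cartesian product (after a cheap connectivity check, connectedness being the natural case since $G$ is connected). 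Then $G$ is a $k$-approximate product precisely when some $H$ passes; as each test costs $O(|E(H)|)=O(n^{2})$, the total running time is $O(n^{2k+2})$, polynomial in $n$ for every fixed $k$.

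The only step requiring genuine care is the combinatorial reformulation: one must verify that the ``two-representations'' definition of $d(G,H)$ is equivalent to reachability by at most $k$ elementary operations — in particular that deleting a vertex also deletes all its incident edges, each such deletion being charged against the budget $k$, and that inserted vertices are anonymous, so that the $O(n^{2k})$ bound genuinely covers every graph within distance $k$. Granting this equivalence, the remainder is a routine (if exponent-in-$k$) enumeration combined with an appeal to the known linear-time Cartesian factorization.
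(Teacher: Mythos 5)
Your proposal is correct: the paper itself gives no proof of this lemma (it is quoted from \cite{HIKS-08}), and the argument behind it is exactly the brute-force enumeration you describe --- list the $O(n^{2k})$ graphs $H$ within edit distance $k$ of $G$ and test each for a non-trivial Cartesian factorization via the linear-time algorithm of Theorem~\ref{thm:upfd}. You also correctly flag the one point needing care, namely that a vertex deletion must be charged for its incident edges under the symmetric-difference definition of $d(G,H)$, so nothing is missing.
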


Without the restriction on $k$ the problem of finding a product of closest
distance to a given graph $G$ is NP-complete for the Cartesian product
\cite{FEHA-89}; see Problem \ref{prob:optimum}.

\subsection{Relations}

We will consider equivalence relations $R$ on edge sets $E$, i.e.,
$R\subseteq E\times E$ such that (i) $(e,e)\in R$ (\emph{reflexivity}), (ii)
$(e,f)\in R$ implies $(f,e)\in R$ (\emph{symmetry}) and (iii) $(e,f)\in R$
and $(f,g)\in R$ implies $(e,g)\in R$ (\emph{transitivity}). We will
furthermore write $\varphi \eqcl R$ to indicate that $\varphi$ is an
equivalence class of $R$. A relation $Q$ is \emph{finer} than a relation
$R$  while the relation $R$ is \emph{coarser} than $Q$ if $(e,f)\in Q$
implies $(e,f)\in R$, i.e, $Q\subseteq R$. In case, a given reflexive and
symmetric relation $R$ need not be transitive, we denote with $R^*$ its
transitive closure, that is the finest equivalence relation on $E(G)$ that
contains $R$. For a given graph $G=(V,E)$ and an equivalence relation $R$
on $E$ we define the \emph{$R$-coloring} of $G$ as a map of the edges onto
its equivalence class, i.e, the edge $e\in E$ is assigned color $k$ iff $e\in
\varphi_k\eqcl R$.

For a given equivalence class $\varphi \eqcl R$ and a vertex $u\in V(G)$ we
denote the set of neighbors of $u$ that are incident to $u$ via an edge in
$\varphi$ by $N_{\varphi}(u)$, i.e.,
\begin{equation*}
N_{\varphi}(u):= \{v\in V(G)\mid [u,v]\in\varphi\}\,.
\end{equation*}
The closed $\varphi$-neighborhood is then
$N_{\varphi}[u]=N_{\varphi}(u)\cup\{u\}$.

For later reference we need the following simple lemma.
\begin{lemma}
	Let $R$ be an equivalence relation defined on the edge set of a given
	graph $G=(V,E)$ and $H\subseteq G$ be a subgraph of $G$.
	Then the restriction $R_{|H} = \{(e,f) \in R \mid e,f \in E(H)\}$ of $R$
	on the edge set $E(H)$ is an equivalence relation.
	\label{lem:restrictionEquirel}
\end{lemma}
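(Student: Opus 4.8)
The plan is to verify the three defining properties of an equivalence relation (reflexivity, symmetry, transitivity) directly for $R_{|H}$, using that these properties already hold for $R$ on $E(G)$ together with the observation that $E(H) \subseteq E(G)$.

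First I would note that $R_{|H} \subseteq E(H) \times E(H)$ is well-defined: if $(e,f) \in R_{|H}$ then by definition $e,f \in E(H)$, so $R_{|H}$ is indeed a relation on $E(H)$. For \emph{reflexivity}, take any $e \in E(H)$. Since $H \subseteq G$ we have $e \in E(G)$, and reflexivity of $R$ gives $(e,e) \in R$; as $e,e \in E(H)$, we conclude $(e,e) \in R_{|H}$. For \emph{symmetry}, suppose $(e,f) \in R_{|H}$. Then $(e,f) \in R$ and $e,f \in E(H)$; symmetry of $R$ yields $(f,e) \in R$, and since $f,e \in E(H)$ still holds, $(f,e) \in R_{|H}$. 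For \emph{transitivity}, suppose $(e,f) \in R_{|H}$ and $(f,g) \in R_{|H}$. Then $(e,f),(f,g) \in R$ and $e,f,g \in E(H)$; transitivity of $R$ gives $(e,g) \in R$, and since $e,g \in E(H)$, we obtain $(e,g) \in R_{|H}$. Hence $R_{|H}$ satisfies all three axioms and is an equivalence relation on $E(H)$.

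There is essentially no obstacle here: the statement is a routine ``restriction of an equivalence relation to a subset is again an equivalence relation'' fact, and the only mild point worth making explicit is that $E(H) \subseteq E(G)$ (guaranteed by $H \subseteq G$) is exactly what lets us invoke the corresponding property of $R$ for edges lying in $H$. I would keep the write-up to a few lines, checking the three axioms in turn as above.
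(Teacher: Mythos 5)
Your proof is correct and is exactly the routine verification the paper has in mind; the paper itself dismisses the statement with the one-word proof ``Clear.'' Your few-line check of reflexivity, symmetry, and transitivity on $E(H)\subseteq E(G)$ is a perfectly adequate (indeed, more explicit) version of the same argument.
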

\begin{proof}
	Clear.
\end{proof}

For the recognition of Cartesian products the relation $\delta$ is of
particular interest.

\begin{definition}
Two edges $e,f\in E(G)$ are in the \emph{relation $\delta(G)$},
if one of the following conditions in $G$ is satisfied:
\begin{itemize}
\item[(i)]  $e$ and $f$ are adjacent and there is no unique square
             spanned by $e$ and $f$ which is in particular chordless. 
\item[(ii)]   $e$ and $f$ are opposite edges of a chordless square.
\item[(iii)] $e=f$.
\end{itemize}
\label{def:delta}
\end{definition}

If there is no risk of confusion we write $\delta$ instead of $\delta(G)$.
Clearly, the relation $\delta$ is reflexive and symmetric but not
necessarily transitive. However, the transitive closure $\delta^*$ is an
equivalence relation on $E(G)$ that contains $\delta$. Note, that our
definition of $\delta$ slightly differs from the usual one, see e.g.
\cite{IZZ-01, Imrich:94}, which is defined analogously without
forcing the chordless square in Condition $(i)$ to be unique.
However, for our
purposes this definition is more convenient and suitable to find the
necessary local information that we use to define those factorizable small
patches which are needed to cover the graphs under investigation and to compute
the PFD or approximations of it with respect to the Cartesian product.
Moreover, as stated in \cite{IZZ-01, Imrich:94}, any pair of adjacent edges that
belong to different $\delta^*$ classes span a unique chordless square,
where $\delta$ is defined without  claiming ``uniqueness'' in Condition $(i)$.
Thus, we can easily conclude that the transitive closure of our relation
$\delta$ and the usual one are identical.

Finally, two edges $e$ and $f$ are in relation $\sigma(G)$ if they have the
same Cartesian colors with respect to the prime factorization of $G$. We
call $\sigma(G)$ the \emph{product relation}. The first polynomial time
algorithm to compute the factorization of a graph explicitly constructs
$\sigma$ starting from the finer relation $\delta$
\cite{FeHeSch85}. The product relation $\sigma$ was later shown to be
simply the convex hull $\mathfrak{C}(\delta)$ of the relation $\delta(G)$
\cite{Imrich:94}.
Notice that $\delta(G) \subseteq \delta(G)^* \subseteq \sigma(G)$ \cite{Imrich:94}.

\section{The Partial Star Product}

\subsection{Basics}

In order to compute $\delta$ from local coverings of the graph $G=(V,E)$ we
need some new notions. Clearly, $\delta$ is still defined in a local manner
since only the (non-)existence of squares are considered and thus, only the
induced $2$-neighborhoods are of central role. However, although the
$2$-neighborhood can be prime, we define subgraphs of $2$-neighborhoods,
that are factorizable or at least graphs that can be isometrically embedded
into Cartesian products and have therefore a rich product structure. For
this purpose we define for a vertex $v\in V(G)$ the relation $\R_v$, that is
a subset of $\delta$ and provides the desired information of the local
product structure of the subgraph $\la N_2[v]\ra$. Based on the transitive
closure $\R^*_v$ we then define the so-called partial star product $S_v$, a
subgraph of $\la N_2[v]\ra$, which provides the details which parts of the
induced 2-neighborhood are factorizable or can be isometrically embedded
into a Cartesian product.

Let $G=(V,E)$ be a given graph, $v \in V$ and $E_v$ be the set of edges
incident to $v$. The local relation $\R_v$ is then defined as $$\R_v = \R(\{v\}) =
((E_v\times E) \cup (E\times E_v)) \cap \delta(G) \subseteq \delta(\la
N_2^G[v]\ra).$$ In other words, $\R_v$ is the subset of $\delta(G)$ that
contains all pairs $(e,f)\in \delta(G)$, where at least one of the edges
$e$ and $f$ is incident to $v$. Note, $\R^*_v$ is not necessarily a
subset of $\delta$ but it is contained in $\delta^*$.

For a subset $W\subseteq V$ we write $\R(W)$ for the union of local relations
$\R_v$, $v\in W$: $$\R(W) = \cup_{v\in W} \R_v.$$

We now define the so-called partial star product $S_v$, that is, a subgraph
containing all edges incident to $v$ and all squares
spanned by edges $e, e'\in E_v$ where $e$ and $e'$ are not in relation $\R_v^*$.
To be more precise:

\begin{definition}[Partial Star Product (PSP)]
Let $F_v\subseteq E\setminus E_v$
be the set of edges which are opposite edges of (chordless)
squares spanned by $e,e'\in E_v$ that are in different
$\R^*_v$ classes, i.e., $(e,e') \not\in \R^*_v$.

The \emph{partial star product} is the subgraph
$S_v \subseteq G$ with edge set $E'= E_v\cup F_v$ and vertex set $\cup_{e\in E'}
e$. We call $v$ the \emph{center} of $S_v$, edges in $E_v$ \emph{primal
edges}, edges in $F_v$ \emph{non-primal edges}, and the vertices adjacent
to $v$ \emph{primal vertices} with respect to $S_v$.
	\label{def:starproduct}
\end{definition}

\begin{figure}[tbp]
  \centering
  \includegraphics[bb= 131 286 477 501, scale=0.7]{./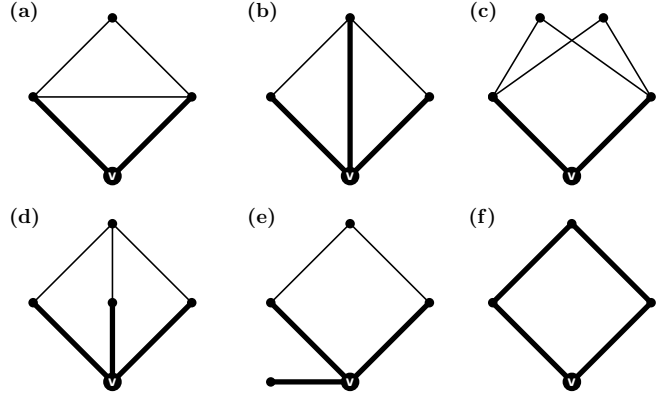}
 \caption{Examples of various PSP's $S_v$ highlighted by thick edges.
		   Note, in all cases except in
		   case $(f)$ the set $F_v$ is empty and hence, the PSP's $S_v$
			in the other cases just contain the edges incident to $v$.
           }
	\label{fig:PSP_definition}
\end{figure}

The reason why we call $S_v$ a partial star product is that $S_v$ is an
isometric subgraph or even isomorphic to a Cartesian product graph $H$ of
stars, as we shall see later (Theorem \ref{thm:isomSubgraph}). Hence, $S_v$
is a partial product of $H$. For the construction of this graph $H$ we introduce
the so-called star factors $\mathbb S_i$, see also Figures
\ref{fig:PSP_definition} and \ref{fig:StarVSn2}.

\begin{figure}[tbp]
  \centering
  \includegraphics[bb= 65 443 503 626, scale=0.5]{./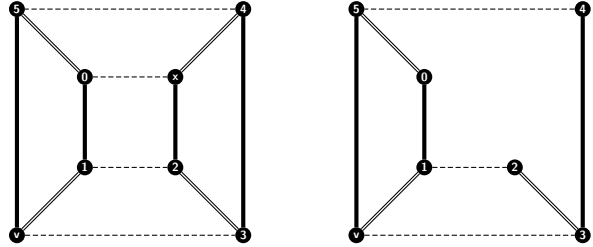}
 \caption{\textbf{Left:} A hypercube $Q_3$ is shown. The three equivalence classes
						of $\delta^*(Q_3)$ are highlighted by solid, dashed and
					  double lined edges, respectively.
						\textbf{Right:} The PSP $S_v$
						is shown. Again, $\R^*_{v|S_v}$ has three equivalence classes.
						However, since the edges $(0,1)$ and $(1,2)$ as well as the
						edges $(2,3)$ and $(3,4)$ span no square we can conclude that
						$\delta^*(S_v)$ just contains one equivalence class. Hence,
						 $\R^*_{v|S_v} \neq \delta^*(S_v)$	.}
	\label{fig:Q3}
\end{figure}

\begin{figure}[tbp]
  \centering
  \includegraphics[bb= 124 422 409 647, scale=0.7]{./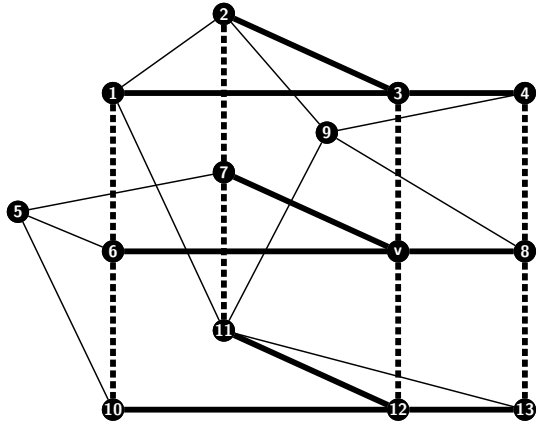}
 \caption{Shown is a graph $G\simeq \la N^G_2[v]\ra$. Note, $\delta(G)^*$
          has one equivalence class and thus, $G$ is prime. However, the
          partial star product (PSP) $S_v$, that is the subgraph that
          consists of thick and dashed edges is not prime. The subgraph
          $S_v$ is isomorphic to the Cartesian Product of a star with four
          and a star with three vertices. The two equivalence classes of
          $\R_{|S_v}$ are highlighted by thick, resp. dashed edges.}
	\label{fig:StarVSn2}
\end{figure}

\begin{definition}[Star Factor]
Let $G=(V,E)$ be an arbitrary given graph and $S_v$ be a PSP for some
vertex $v\in V$. Assume $\R_v^*$ has equivalence classes $\varphi_1, \dots,
\varphi_n$. We define the star factor $\mathbb S_i$ as the graph with
vertex set $N_{\varphi_i}[v]$ that contains all primal edges of $E_v$ that
are also in the induced closed $\varphi_i$-neighborhood, i.e., $E(\mathbb
S_i)= E(\la N_{\varphi_i}[v] \ra) \cap E_v$.
\label{def:starfactor}
\end{definition}

Note, this definition forbids triangles in $\mathbb S_i$, and hence, each
$\mathbb S_i$ is indeed a star. We denote the restriction of $\R^*_v$ to
the subgraph $S_v$ with $$\R_{|S_v}:=\R^*_{v|S_v}= \{(e,f) \in \R^*_v \mid
e,f \in E(S_v)\}.$$ In other words, $\R_{|S_v}$ is the subset of $\R^*_v$
that contains all pairs of edges $(e,f)\in \R^*_v$ where both edges $e$ and
$f$ are contained in $S_v$. We want to emphasize that $\R^*_{v|S_v} \neq
\delta^*(S_v)$; see Figure \ref{fig:Q3}. In addition, by Lemma
\ref{lem:restrictionEquirel} we can conclude that $\R_{|S_v}$ is an
equivalence relation. For a given subset $W\subseteq V$ we define
$$\R_{|S_v}(W) = \cup_{v\in W} \R_{|S_v}$$ as the union of relations
$\R_{|S_v}$, $v\in W$. As it will turn out, for a given graph $G=(V,E)$ the
transitive closure $\R_{|S_v}(V)^*$ is the equivalence relation
$\delta(G)^*$, see Theorem \ref{thm:union_equals_delta}.

\subsection{Properties of the Partial Star Product}

We now establish basic properties of the graph $S_v$, its edge sets $E_v$
and $F_v$, as well as of the relation $\R_v^*$ and its restriction $\R_{|S_v}$
to $S_v$.

\begin{lemma}
  Given a graph $G=(V,E)$ and a vertex $v\in V$. Then
	$F_v = \emptyset$ if and only if for all edges $e,e'\in E_v$
	holds $(e,e')\in \R^*_v$. Moreover, if $F_v \neq \emptyset$
	then $|F_v|\geq 2$.
   \label{lem:psp0}
\end{lemma}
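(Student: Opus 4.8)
The plan is to prove both directions of the equivalence and then the cardinality bound separately. For the equivalence, the reverse direction is immediate: if $(e,e')\in\R^*_v$ for all $e,e'\in E_v$, then by the definition of $F_v$ there can be no square spanned by two primal edges lying in distinct $\R^*_v$-classes, so $F_v=\emptyset$. For the forward direction I would argue contrapositively: suppose there exist $e,e'\in E_v$ with $(e,e')\notin\R^*_v$. Since $\R^*_v\supseteq\R_v\supseteq\delta(G)_{|\text{pairs through }v}$, in particular $(e,e')\notin\delta(G)$; as $e,e'$ are adjacent (both incident to $v$), Definition~\ref{def:delta}(i) forces the existence of a \emph{unique} chordless square spanned by $e$ and $e'$. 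The opposite edge of $e$ in that square then lies in $F_v$, so $F_v\neq\emptyset$. This uses only the definitions of $\delta$, $\R_v$, $\R^*_v$ and $F_v$, together with the observation that adjacent edges not in $\delta$ must span a unique chordless square.

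For the cardinality statement, suppose $F_v\neq\emptyset$, so there is a chordless square $\la\{v,u,x,w\}\ra$ with $e=(v,u)$, $e'=(v,w)\in E_v$, $(e,e')\notin\R^*_v$, and non-primal edge $g=(u,x)\in F_v$ (the opposite edge of $e'$). The key point is that the opposite edge of $e$ in the \emph{same} square, namely $g'=(w,x)$, also lies in $F_v$: it is the opposite edge of the primal edge $e'$ across a chordless square spanned by $e'$ and the primal edge $e$, and by hypothesis $(e',e)\notin\R^*_v$, so $g'\in F_v$ by definition. Since $g=(u,x)$ and $g'=(w,x)$ are distinct edges (as $u\neq w$, because the square is chordless so $u$ and $w$ are non-adjacent and in particular distinct), we get $|F_v|\geq 2$.

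The main obstacle is the subtlety in the forward direction of the equivalence: one must be careful that $(e,e')\notin\R^*_v$ genuinely implies $(e,e')\notin\delta(G)$, i.e.\ that transitive closure does not ``create'' the pair in a way that destroys the square-existence conclusion. But this is fine: $\delta(G)$ is reflexive and symmetric, and if $(e,e')\in\delta(G)$ with both edges incident to $v$ then $(e,e')\in\R_v\subseteq\R^*_v$, contradiction; hence $(e,e')\notin\delta(G)$, and since $e,e'$ are adjacent, Definition~\ref{def:delta} leaves only the possibility that a unique chordless square is spanned by them (note $e\neq e'$, else $(e,e')\in\R^*_v$ by reflexivity). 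The remaining details — that the two opposite edges obtained lie outside $E_v$ and are distinct — are routine and follow from chordlessness of the square.
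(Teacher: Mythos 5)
Your proposal is correct and follows essentially the same route as the paper: the easy direction from the definition of $F_v$, the hard direction by observing that adjacent primal edges not in $\R^*_v$ (hence not in $\R_v$, hence not in $\delta(G)$) must span a unique chordless square by Definition~\ref{def:delta}(i), whose opposite edges then lie in $F_v$, and the bound $|F_v|\geq 2$ from the two distinct opposite edges of that square. Your write-up is in fact slightly more careful than the paper's on the points where $(e,e')\notin\R^*_v$ implies $(e,e')\notin\delta(G)$ and where the two opposite edges are distinct, which the paper passes over as immediate.
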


\begin{proof}
	Clearly, if for all edges $e,e'\in E_v$ holds $(e,e')\in \R^*_v$ then by
	definition $F_v=\emptyset$.

	Let $F_v = \emptyset$ and assume there are edges $e,e'\in E_v$ that are
	not in relation $\R_v^*$. In particular, these edges are not in relation
	$\R_v$, and therefore not in relation $\delta(G)$. By Condition $(i)$ of
	Def. \ref{def:delta} and since $e$ and $e'$ are adjacent, there is a
	chordless square containing $e$ and $e'$ and therefore, respective
	opposite edges $f$ and $f'$. Condition $(ii)$ of Def. \ref{def:delta}
	implies $(e,f), (e',f') \in \delta(G)$. Therefore, $f,f' \in F_v$, a
	contradiction.

	Furthermore, since $F_v$ contains all opposite edges of squares spanned by
	$e, e' \in E_v$ we can easily conclude that $|F_v|\geq 2$, if $F_v \neq
	\emptyset$.
\end{proof}

\begin{lemma}
\label{lem:PSP1}
Let G=(V,E) be a given graph and let $S_v$ be a PSP for some vertex $v\in V$.
If  $e, f \in E_v$ are primal edges that are not in relation $\R_v^*$, then
$e$ and $f$ span a unique chordless square with a unique top vertex in $G$.

Conversely, suppose that $x$ is a non-primal vertex of $S_v$,
then there is a unique chordless square in $S_v$ that contains vertex $x$
and that is spanned by edges $e, f \in E_v$ with $(e,f)\not\in \R^*_v$.
\end{lemma}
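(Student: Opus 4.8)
The plan is to prove the two assertions of Lemma~\ref{lem:PSP1} separately, both by unwinding the definitions of $\delta$ and of $\R_v^*$.

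\textbf{Forward direction.} Suppose $e=(v,u),f=(v,w)\in E_v$ with $(e,f)\notin\R_v^*$. Since $\R_v\subseteq\R_v^*$, in particular $(e,f)\notin\R_v$, and because both edges lie in $E_v$ this means $(e,f)\notin\delta(G)$. Now $e$ and $f$ are adjacent (they share $v$), so the failure of Condition~(i) of Definition~\ref{def:delta} forces the existence of a \emph{unique} chordless square spanned by $e$ and $f$. Write this square as $\la\{v,u,x,w\}\ra$; then $x$ is its top vertex. It remains to argue that $x$ is a \emph{unique} top vertex, i.e.\ $|N[x]\cap N[v]|=2$. If not, there is a third edge $g\in E_v$, say $g=(v,z)$ with $z\in N[x]$, $z\neq u,w$, so that $z$ is adjacent to $x$; then $e,g$ (or $f,g$) together with $x$ span a further square, contradicting either the uniqueness of the square on $e,f$ (if that extra square also uses $u$ or $w$) or, more to the point, showing that the top vertex $x$ is not unique in the sense of the definition. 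I would phrase this carefully: the non-uniqueness of the top vertex $x$ means $e$ and $f$ each span a non-unique square (with a third edge $g$), which by Condition~(i) would put $(e,g),(f,g)\in\delta(G)\subseteq\R_v$, and hence $(e,f)\in\R_v^*$ by transitivity — contradicting $(e,f)\notin\R_v^*$. So $x$ is a unique top vertex, and the square is the unique chordless one on $e,f$.

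\textbf{Converse direction.} Let $x$ be a non-primal vertex of $S_v$, so $x\notin N[v]$ and $x$ is incident only to non-primal edges of $F_v$. By the definition of $F_v$ (Definition~\ref{def:starproduct}), $x$ lies on some non-primal edge $h\in F_v$, and $h$ is the opposite edge of a chordless square spanned by edges $e,e'\in E_v$ with $(e,e')\notin\R_v^*$; that square contains $x$ as its top vertex and lies in $S_v$ since all four of its edges ($e,e'$ and the two opposite edges) belong to $E_v\cup F_v$. For uniqueness, suppose $x$ lay on two distinct such squares in $S_v$, spanned by pairs $e,e'$ and $\tilde e,\tilde e'$ from $E_v$. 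By the forward direction just proved, $e,e'$ span a unique chordless square in $G$ with unique top vertex $x$, so $|N[x]\cap N[v]|=2$, meaning $x$ has exactly two neighbours among the primal vertices; but a square spanned by $\tilde e=(v,\tilde u),\tilde e'=(v,\tilde w)$ with top vertex $x$ requires $\tilde u,\tilde w\in N[x]\cap N[v]$, forcing $\{\tilde u,\tilde w\}=\{u,w\}$ and hence the two squares coincide. Thus the square is unique.

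\textbf{Main obstacle.} The delicate point is the forward direction's uniqueness-of-top-vertex claim: one must translate the combinatorial statement ``$x$ is not a unique top vertex'' into the relational statement ``$(e,f)\in\R_v^*$'', being careful that the existence of a third edge $g$ sharing the top vertex puts the relevant pairs into $\delta(G)$ via Condition~(i) (because those squares are then non-unique) and therefore into $\R_v$, so transitivity of $\R_v^*$ collapses $e$ and $f$ into one class. I would want to handle with care the degenerate possibilities (e.g.\ $g$ adjacent to both $u$ and $w$, or $g$ coinciding combinatorially with configurations already counted) to make sure no case escapes the argument; everything else is a direct consequence of Definitions~\ref{def:delta}, \ref{def:starproduct} and the transitivity of $\R_v^*$.
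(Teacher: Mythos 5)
Your overall strategy matches the paper's, and the first half of the forward direction (deriving the existence of a unique chordless square from the failure of Condition (i)) is sound, as is most of the converse. But the step you yourself flag as delicate — uniqueness of the top vertex — contains a genuine gap. You argue that if $x$ is not a unique top vertex, i.e.\ there is a third vertex $y\in N[x]\cap N[v]$ giving $g=(v,y)\in E_v$, then ``$e$ and $f$ each span a non-unique square with $g$,'' so Condition (i) yields $(e,g),(f,g)\in\delta(G)$. That inference is false in general: the squares $\langle\{v,u,x,y\}\rangle$ and $\langle\{v,w,x,y\}\rangle$ may each be the \emph{unique} chordless square spanned by $(e,g)$, resp.\ $(f,g)$, in which case Condition (i) gives $(e,g)\notin\delta(G)$ and $(f,g)\notin\delta(G)$ and your chain breaks. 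Take $G\simeq K_{2,3}$ with parts $\{v,x\}$ and $\{u,w,y\}$: every pair of edges at $v$ spans a unique chordless square, so Condition (i) relates nothing, yet $x$ is a common top vertex of all three squares. The missing idea — the one the paper uses — is Condition (ii): the edge $a=(x,y)$ is the opposite edge of $e$ in the chordless square $\langle\{v,u,x,y\}\rangle$ and of $f$ in $\langle\{v,w,x,y\}\rangle$, so $(e,a),(f,a)\in\delta(G)$; both pairs lie in $\R_v$ because $e,f\in E_v$, and transitivity gives $(e,f)\in\R_v^*$. The paper completes this with a case distinction according to whether the two extra squares are chordless (Condition (ii) applies, via the intermediary edge $a$) or carry chords (only then does Condition (i) apply, as you intended); your sketch covers only the latter situation.

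A minor remark on the converse: you only exclude a second square through $x$ spanned by another pair of primal edges. The paper additionally excludes a second square in $S_v$ through $x$ spanned by the two non-primal edges at $x$ (such a square would force a vertex $y$ with $\langle\{v,u,y,w\}\rangle$ a square, contradicting the uniqueness already established). Depending on how strictly one reads ``unique chordless square in $S_v$ that contains $x$,'' you need that extra case as well.
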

\begin{proof}
	First, we show that $e$ and $f$ span a unique chordless square in $G$. By
	contraposition, assume $e$ and $f$ span no unique chordless square in $G$. Since
	$e$ and $f$ are adjacent, Condition $(i)$ of Def. \ref{def:delta} implies
	that $(e,f) \in \delta(G)$ and hence, $(e,f) \in \R_v \subseteq \R_v^*$.
	Therefore, if $(e,f) \notin \R_v^*$, then they must span a unique chordless
	square. Let $e=(v,u)$ and $f=(v,w)$, $(e,f) \notin \R_v^*$, span the
	unique chordless square $SQ_1 =\langle\{ v,u,x,w\} \rangle$ and assume
	for contradiction that the top vertex $x$ is not unique. Hence, there must be
	at least three squares: the square $SQ_1$, the square $SQ_2=\langle\{
	v,u,x,y\} \rangle$ spanned by $e$ and $g$, and the square $SQ_3=\langle\{
	v,w,x,y\} \rangle$ spanned by $f$ and $g=(v,y)$. We denote edges as
	follows: $a=(x,y)$ and $b=(x,w)$. Assume both squares $SQ_2$ and $SQ_3$
	are chordless. Then Def. \ref{def:delta} $(ii)$ implies $(f,a), (a,e)\in
	\delta(G)$ and therefore, $(e,f) \in \R_v^*$, a contradiction. If both
	squares have a chord then Def. \ref{def:delta} $(i)$ implies that $(e,g),
	(f,g)\in \delta(G)$ and thus, $(e,f)\in \R_v^*$, again a contradiction. If
	only one square, say $SQ_2$, has a chord $(u,y)$, then $(e,g)\in
	\delta(G)$ and $(f,a), (g,a)\in \delta(G)$ and again we have $(e,f) \in
	\R_v^*$.

	Assume $x$ is a non-primal vertex in $S_v$. By definition, there are
	non-primal edges $f'=(x,u), e'=(x,w) \in F_v$ that are contained in a
	square spanned by $e= (v,u), f=(v,w) \in E_v$, whereas $(e,f)\not\in
	\R_v^*$. As shown above, the square spanned by $e$ and $f$ is unique with
	unique top vertex in $G$ and therefore in $S_v$.
	Hence, if there is another square in $S_v$ containing
	$x$ then it must be spanned by $e',f'$ and this square contains
	additional edges $f''=(y,u), e''=(y,w)$. However, then there is a square
	$\langle\{ v,u,y,w\} \rangle$, which contradicts the fact that the square
	spanned by $e$ and $f$ is unique. If the unique square spanned by $e$ and
	$f$ is not chordless in $G$, then Def. \ref{def:delta} $(i)$ implies $(e,f)\in
	\delta(G)$ and thus $(e,f)\in \R_v^*$, a contradiction.	
\end{proof}

By means of Lemma \ref{lem:psp0} and \ref{lem:PSP1} and the definition of
partial star products we can directly infer the next corollary.

\begin{corollary}
Let G=(V,E) be a given graph and let $S_v$ be a PSP for some vertex $v\in V$.
\begin{enumerate}
\item If  $(e,f)\in \R_v^*$ then there is no square in $S_v$ spanned by $e$ and $f$.
\item Every square in $S_v$ contains two edges $e,e' \in E_v$ and two edges $f,f' \in F_v$,
       and every edge $f \in F_v$ is opposite to some primal edge $e \in E_v$.
\item Every non-primal vertex in $S_v$ is a unique top vertex of some square spanned
       by edges $e, e' \in E_v$.
\end{enumerate}
\label{cor:PSP1}
\end{corollary}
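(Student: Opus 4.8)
The plan is to derive all three items directly from Lemmas~\ref{lem:psp0} and~\ref{lem:PSP1} together with Definition~\ref{def:starproduct}, with essentially no new combinatorial work; the three assertions are repackagings of facts already in hand. For item~(1), suppose $(e,f)\in\R_v^*$ and assume for contradiction that $e,f$ span a square $Q$ in $S_v$. Since $S_v$ has vertex set $\bigcup_{g\in E'}g$ and edge set $E'=E_v\cup F_v$, a square in $S_v$ spanned by two edges of $E_v$ contributes its two opposite (non-primal) edges to $S_v$, which by Definition~\ref{def:starproduct} forces those edges into $F_v$ and hence forces $(e,f)\notin\R_v^*$ --- the contradiction. (If $e=f$ there is no square at all.) So the only subtlety here is to note that ``square in $S_v$ spanned by $e$ and $f$'' entails the opposite edges lie in $S_v$, which is immediate from how $S_v$ is built.

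For item~(2), take any square $Q$ in $S_v$. I would first argue $Q$ must contain at least one primal edge: every edge of $F_v$ is, by definition, the opposite edge of a chordless square spanned by two primal edges, so every vertex of $S_v$ other than $v$ lies on a primal edge, but more to the point $F_v$ is an independent-ish set in the relevant sense --- I will instead argue that $Q$ cannot consist solely of non-primal edges. If $Q$ had two incident edges both in $F_v$, meeting at a vertex $x$, then $x$ is a non-primal vertex (it is not $v$, since edges at $v$ are primal), and by the converse direction of Lemma~\ref{lem:PSP1} the unique square through $x$ in $S_v$ is the one spanned by the two primal edges $e,f\in E_v$ witnessing $x\in F_v$-membership; that square has exactly two primal and two non-primal edges. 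Any square containing $x$ must be that one, so $Q$ has exactly two edges in $E_v$ (incident at $v$) and two in $F_v$. The second clause of item~(2), that every $f\in F_v$ is opposite some $e\in E_v$, is literally the definition of $F_v$.

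Item~(3) is exactly the converse statement of Lemma~\ref{lem:PSP1}: a non-primal vertex $x$ of $S_v$ lies on a unique square of $S_v$ spanned by edges $e,e'\in E_v$ with $(e,e')\notin\R_v^*$, and by the forward direction of the same lemma that square has a unique top vertex in $G$; since $x$ is the top vertex of this square (it is the vertex opposite $v$), $x$ is a unique top vertex. I would phrase this as a one-line citation of Lemma~\ref{lem:PSP1}.

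I do not anticipate a genuine obstacle; the only place requiring care is item~(2), where one must rule out a square in $S_v$ using three or four non-primal edges, and the right tool is the uniqueness clause of Lemma~\ref{lem:PSP1} applied at a non-primal vertex: once a non-primal vertex $x$ is on the square, the square is pinned down to be the unique $e,e'$-square, which has the stated $2{+}2$ split. A secondary small point is to confirm that $v\notin F_v$-endpoints in a way that would spoil the ``two edges at $v$'' conclusion --- but edges incident to $v$ are primal by definition, so a square through $v$ contributes its two $v$-edges to $E_v$ and (being a square in $S_v$ spanned by primal edges) its two opposite edges to $F_v$, consistent with item~(2). With these observations the corollary follows in a few lines, which is presumably why the authors call it a direct inference.
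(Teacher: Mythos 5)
Your overall route is the intended one: the paper offers no written proof for this corollary, stating only that it follows from Lemma~\ref{lem:psp0}, Lemma~\ref{lem:PSP1} and Definition~\ref{def:starproduct}, and your items (2) and (3) are essentially the right unpacking of that claim. However, your argument for item (1) has a genuine logical gap. You argue: the square spanned by $e,f\in E_v$ lies in $S_v$, so its two opposite edges lie in $F_v$, ``and hence'' $(e,f)\notin\R_v^*$. That ``hence'' does not follow from Definition~\ref{def:starproduct}. Membership of an edge $g$ in $F_v$ only certifies that $g$ is the opposite edge of \emph{some} chordless square spanned by \emph{some} pair of non-$\R_v^*$-related primal edges; it says nothing a priori about the particular square spanned by $e$ and $f$, which need not even be chordless. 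To close the gap you must identify the witnessing square with the $e,f$-square, and for that you need the uniqueness clauses of Lemma~\ref{lem:PSP1}: the witnessing square for $(u,x)\in F_v$ has a unique top vertex, which forces its top vertex to be $x$ and its two primal edges to be exactly $(v,u)$ and $(v,w)$, whence $(e,f)=((v,u),(v,w))\notin\R_v^*$. (You should also say explicitly that you are treating only $e,f\in E_v$; the statement as printed does not restrict $e,f$, although the primal case is the one the paper later uses.)

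There is a second, smaller flaw in item (2): you assert that a vertex $x$ at which two $F_v$-edges of the square meet is non-primal ``since it is not $v$.'' Being distinct from $v$ does not make a vertex non-primal, and a primal vertex can indeed be incident to two edges of $F_v$, so the converse direction of Lemma~\ref{lem:PSP1} cannot be applied at $x$ as you invoke it. The repair is short: a square of $S_v$ avoiding $v$ has all four edges in $F_v$, and each edge of $F_v$ joins a primal vertex to a non-primal one (the defining squares are chordless, so their top vertices are not adjacent to $v$); hence such a square alternates primal and non-primal vertices and in particular contains a non-primal vertex, at which the uniqueness from Lemma~\ref{lem:PSP1} yields the contradiction. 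With these two repairs your proof is correct and coincides with what the paper implicitly intends.
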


\begin{lemma}
Let G=(V,E) be a given graph and let $f \in F_v$ be a non-primal edge of a
PSP $S_v$ for some vertex $v\in V$. Then $f$ is opposite to exactly one
primal edge $e\in E_v$ in $S_v$ and $(e,f)\in \R_{|S_v}$.
\label{lem:PSP2}
\end{lemma}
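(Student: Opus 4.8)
The plan is to prove the two assertions of Lemma~\ref{lem:PSP2} in turn, relying chiefly on Corollary~\ref{cor:PSP1} and Lemma~\ref{lem:PSP1}, which already capture the geometry of squares in a PSP. First I would establish that $f \in F_v$ is opposite to \emph{at least} one primal edge: this is immediate from part~(2) of Corollary~\ref{cor:PSP1}, which says every non-primal edge of $S_v$ is opposite to some primal edge $e \in E_v$. For uniqueness, suppose $f = (x,u)$ were opposite in $S_v$ to two distinct primal edges $e = (v,u)$ and $e' = (v,w)$ (both must contain the endpoint shared with $f$ under the opposite-edge relation, and since $f \in F_v$ is non-primal neither $x$ nor the other endpoint of $f$ is $v$). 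By part~(3) of Corollary~\ref{cor:PSP1}, the non-primal vertex $x$ is the unique top vertex of the square spanned by the two primal edges forming that square with $f$; but having $f$ opposite to both $e$ and $e'$ would force two distinct squares through $x$ spanned by primal edges, and by Lemma~\ref{lem:PSP1} (the converse direction) there is only one such square in $S_v$ containing $x$. Hence $e$ is unique. One should double-check the degenerate configurations — e.g. the two candidate primal edges sharing the endpoint $u$ versus the endpoint $x$ of $f$ — but in each case the conclusion is that the two squares coincide, contradicting distinctness of $e$ and $e'$.

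Next I would prove $(e,f) \in \R_{|S_v}$, i.e. that the unique primal edge $e$ and the non-primal edge $f$ lie in the same $\R_v^*$ class and that both are edges of $S_v$. That $e, f \in E(S_v)$ is clear since $e \in E_v$ and $f \in F_v$ and $E(S_v) = E_v \cup F_v$ by Definition~\ref{def:starproduct}. For the relation itself: since $f$ is opposite to $e$ in a chordless square of $S_v$ (by part~(2) of Corollary~\ref{cor:PSP1} the square containing $f$ has its two $E_v$-edges and this square is chordless by construction of $F_v$), Condition~(ii) of Definition~\ref{def:delta} gives $(e,f) \in \delta(G)$. Now $e \in E_v$, so this pair already lies in $\R_v = ((E_v \times E) \cup (E \times E_v)) \cap \delta(G)$, hence in $\R_v^*$. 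Combined with $e,f \in E(S_v)$, the definition of $\R_{|S_v} = \{(e,f) \in \R_v^* \mid e,f \in E(S_v)\}$ yields $(e,f) \in \R_{|S_v}$.

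The main obstacle, such as it is, is the uniqueness argument: one must be careful that "opposite edge in $S_v$" is interpreted relative to squares that actually sit inside $S_v$, and that a non-primal edge $f$ cannot simultaneously play the role of opposite edge in two genuinely different squares of $S_v$. The safeguard is Lemma~\ref{lem:PSP1}, whose converse statement pins down that a non-primal vertex of $S_v$ lies in exactly one chordless square of $S_v$, spanned by primal edges not in relation $\R_v^*$; tracing the two hypothetical squares back to a common non-primal vertex and invoking this uniqueness closes the gap. Everything else is bookkeeping against the definitions of $E_v$, $F_v$, $S_v$, and $\R_{|S_v}$.
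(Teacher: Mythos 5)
Your proof is correct and follows essentially the same route as the paper: existence from Corollary \ref{cor:PSP1}(2), uniqueness by funnelling two hypothetical squares through the non-primal endpoint of $f$ into the uniqueness statement of Lemma \ref{lem:PSP1} (the paper phrases this as a violation of the unique-top-vertex property, which is the same fact), and $(e,f)\in\R_{|S_v}$ via Condition (ii) of Definition \ref{def:delta} together with $e\in E_v$ and the definition of the restriction. The only blemish is notational: you write $f=(x,u)$ as ``opposite'' to $e=(v,u)$, yet these share the endpoint $u$ and so cannot be opposite edges of a square --- the correct picture is that $f$ has exactly one primal endpoint, its non-primal endpoint is the top vertex of every square of $S_v$ containing $f$, and $f$ is opposite to the primal edge reaching the fourth corner; with that fixed, the degenerate configurations you flag do not arise and your argument goes through as written.
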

\begin{proof}
By Corollary \ref{cor:PSP1}, construction of $S_v$ and since $f\in F_v$,
there is at least one edge $e\in E_v$ such that $f$ is opposite to $e$
and therefore at least one square $SQ_1=\la \{v,w,x,u\} \ra$ in $S_v$
spanned by primal edges $e=(v,u)$ and $e'=(v,w)$ that contains the
edge $f=(w,x)$. Note, by construction $(e,e')\not\in \R^*_v$ and
$e$ is opposite to $f$.
Assume for contradiction that $f$ is opposite to another edge $g=(v,y)$.
Then there is another square $SQ_2=\la\{v,y,x,w\}\ra$.
Hence, $e$ and $e'$ do not span a square with unique top vertex in $G$.
By Definition \ref{def:delta} and Lemma \ref{lem:PSP1} we can conclude that
 $(e,e')\in \R^*_v$, a contradiction.
Hence $e$ and $e'$ span a unique chordless square containing the
edge $f$. By Condition (i) of Definition \ref{def:delta} it holds $(e,f)\in\delta$.
Since $e\in E_v$ we claim $(e,f)\in \R_v$ and consequently $(e,f)\in \R_{|_{S_v}}$.
\end{proof}

\begin{lemma}
Let G=(V,E) be a given graph with maximum degree $\Delta$
and $W\subseteq V$ such that $\la W \ra$ is connected.
Then each vertex $x\in W$ meets every equivalence class
of $\R_{|S_v}(W)^*$ in $\cup_{v\in W} S_v$, i.e.,
for each equivalence class $\varphi\eqcl \R_{|S_v}(W)^*$ and
for each vertex $x\in W$ there is an edge $(x,y)\in \varphi$
with $(x,y)\in E(\cup_{v\in W} S_v)$.
Moreover, $\R_{|S_v}(W)^*$ has at most $\Delta$ equivalence classes.
\label{lem:vMeetsEveryClass}
\end{lemma}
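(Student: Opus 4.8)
The plan is to prove the two assertions separately, starting with the claim that every vertex of $W$ meets every equivalence class of $\R_{|S_v}(W)^*$, and then deducing the bound on the number of classes.

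\textbf{Step 1: Reduction to a single vertex and the ``propagation'' of classes.}
First I would fix an equivalence class $\varphi \eqcl \R_{|S_v}(W)^*$ and a vertex $x \in W$, and reduce to showing that $x$ is incident to at least one edge of $\varphi$ lying in $\cup_{v\in W} S_v$. The relation $\R_{|S_v}(W)^*$ is the transitive closure of the union $\cup_{v\in W}\R_{|S_v}$, so its classes are obtained by ``gluing'' along chains of edges that share a local relation at some $v\in W$. The key local fact I would extract is: whenever $u\in W$ and $e,e'\in E_u$ are two primal edges of $S_u$ incident to $u$, they are \emph{comparable} under $\R_{|S_u}$ in the sense that either $(e,e')\in\R_{|S_u}$ or $e,e'$ span a square in $S_u$; in the latter case, by Lemma~\ref{lem:PSP2} the non-primal edge opposite to $e$ is in the class of $e$, and similarly for $e'$. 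Either way, the set of edges of $E_u\cap E(S_u)$ is partitioned into the classes of $\R_{|S_u}$, and \emph{every} such edge lies in some class. Hence each vertex $u\in W$ is incident, within $S_u$, to an edge of every class of $\R_{|S_u}$ that ``touches'' $u$.

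\textbf{Step 2: Connectivity of $\la W\ra$ carries a class from vertex to vertex.}
The main work is to show that if a class $\varphi$ of $\R_{|S_v}(W)^*$ contains some edge incident to a vertex $u\in W$, and $u'$ is a neighbor of $u$ in $\la W\ra$, then $\varphi$ also contains an edge incident to $u'$ lying in $\cup_{v\in W}S_v$. Let $e=(u,u')\in E(\la W\ra)$. Consider the PSP $S_u$: the edge $e$ is primal in $S_u$, so it lies in some class $\psi$ of $\R_{|S_u}\subseteq\R_{|S_v}(W)^*$. Now in $S_{u'}$, the same edge $e=(u',u)$ is primal, so it lies in some class of $\R_{|S_{u'}}$, and $S_{u'}$ contributes, at $u'$, an edge of \emph{that} class. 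Since $e$ is common to both $S_u$ and $S_{u'}$, the classes of $\R_{|S_u}$ and $\R_{|S_{u'}}$ containing $e$ are merged in the transitive closure $\R_{|S_v}(W)^*$. Iterating this along a path in $\la W\ra$ from $x$ to any vertex on which $\varphi$ is witnessed, I would conclude that $\varphi$ is witnessed at $x$ as well; because $\la W\ra$ is connected this reaches every $x\in W$. The subtlety — and the step I expect to be the main obstacle — is making the bookkeeping precise: one must verify that the ``bridge'' edge $e$ really does lie in $E(S_u)\cap E(S_{u'})$ (immediate, since $e\in E_u\cap E_{u'}$ and primal edges always belong to their PSP), and that a class which is \emph{nonempty} as a class of $\R_{|S_v}(W)^*$ necessarily already intersects $E_u$ for \emph{some} $u\in W$ — which holds because every edge of $\cup_{v\in W}S_v$ is incident to (hence ``seen at'') a vertex of $W$: primal edges of $S_v$ are incident to $v\in W$, and a non-primal edge $f\in F_v$ is by Lemma~\ref{lem:PSP2} in the same $\R_{|S_v}$-class as a primal edge $e\in E_v$ incident to $v$. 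So every class is witnessed somewhere in $W$, and Step~2 propagates it everywhere in $W$.

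\textbf{Step 3: The bound $\le\Delta$ classes.}
Once every vertex $x\in W$ meets every class of $\R_{|S_v}(W)^*$ via an edge of $\cup_{v\in W}S_v$ incident to $x$, the number of classes is at most the degree of $x$ in $\cup_{v\in W}S_v$, which is at most the degree of $x$ in $G$, hence at most $\Delta$. Picking any single $x\in W$ (which exists since $\la W\ra$ is connected and nonempty — or trivially if $W=\emptyset$ there is nothing to prove) finishes the argument. I would present Steps~1 and~2 as the substantive content and Step~3 as a one-line corollary of the ``meets every class'' statement.
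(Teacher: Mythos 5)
Your overall architecture (local witnessing of classes at each vertex, propagation along $\la W\ra$, then the degree bound) matches the paper's, and your Step~3 is correct --- in fact it is a cleaner justification of the bound $\leq\Delta$ than the paper's own counting remark. The genuine gap is in Step~2, which you rightly identify as the main work. You fix a class $\varphi\eqcl\R_{|S_v}(W)^*$ witnessed at $u$ and want to witness it at a neighbour $u'\in W$; but your argument only examines the bridge edge $e=(u,u')$ and concludes that the class \emph{containing $e$} is witnessed at both endpoints. That says nothing about an arbitrary $\varphi$ with $e\notin\varphi$, and iterating along a path in $\la W\ra$ only ever transfers the classes of the successive bridge edges, never $\varphi$ itself. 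The missing step is precisely the square argument the paper uses: if $g\in E_u\cap\varphi$ and $e=(u,u')\notin\varphi$, then $g$ and $e$ lie in different classes of $\R_{|S_u}$, hence $(g,e)\notin\R_u^*$, hence $(g,e)\notin\delta(G)$; since $g$ and $e$ are adjacent, Condition $(i)$ of Definition~\ref{def:delta} (equivalently Lemma~\ref{lem:PSP1}) forces them to span a unique chordless square, whose edge opposite to $g$ is a non-primal edge of $S_u$ \emph{incident to $u'$}, and Lemma~\ref{lem:PSP2} places that edge in the same $\R_{|S_u}$-class as $g$, hence in $\varphi$.

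What makes this frustrating is that you already state the needed dichotomy in Step~1 (``either $(e,e')\in\R_{|S_u}$ or they span a square in $S_u$ whose opposite edges stay in the respective classes''), but you draw from it only the vacuous conclusion that every primal edge lies in some class; you never observe that the opposite edge produced by the square is incident to the \emph{other} endpoint of $e$, which is the entire content of the propagation step. Your closing remark in Step~2 --- that every nonempty class is witnessed at some vertex of $W$ because non-primal edges of $S_v$ are tied by Lemma~\ref{lem:PSP2} to primal ones --- is correct and matches the paper's opening paragraph. So the proof is repairable: insert the square argument into Step~2 and the induction along $\la W\ra$ goes through exactly as in the paper.
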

\begin{proof}
 Let $v\in W$ be an arbitrary vertex and $S_v$ be its PSP. We show first
 that $v$ meets every equivalence class of $\R_{|S_v}$ in $S_v$. Assume for
 contradiction that there is an equivalence class $\varphi \eqcl \R_{|S_v}$
 that is not met by $v$ and hence for all edges $e\in E_v$ we have $e\not \in
 \varphi$. Hence, there must be a non-primal $f\in F_v$ with $f \in
 \varphi$. By construction of $S_v$ and by Lemma \ref{lem:PSP2} this edge
 $f$ is opposite to exactly one edge $e\in E_v$ with $(e,f)\in \R_{|S_v}$,
 but then $e\in \varphi$, a contradiction. We show now that every primal
 vertex $w$ in $S_v$ meets every equivalence class of $\R_{|S_v}$. Let
 $\varphi \eqcl \R_{|S_v}$ be an arbitrary equivalence class. If $e =
 (v,w)\in \varphi$ we are done. Therefore assume $e\not \in \varphi$.
 Hence, there must be at least a second equivalence class $\varphi' \eqcl
 \R_{|S_v}$ with $e\in \varphi'$. Since vertex $v$ meets every equivalence
 class there is an edge $e' = (v,u)\in \varphi$. Moreover, since
 $(e,e')\not\in \R_v^*$ it follows that $(e,e')\not\in \R_v \subseteq
 \delta$. Since $e$ and $e'$ are adjacent and by Condition $(i)$ of
 Definition \ref{def:delta} the edges $e$ and $e'$ span a unique chordless
 square. Hence, there is an opposite edge $f=(w,x)$ of $e'$. By
 construction of $S_v$ we have $f\in F_v$ and hence, Lemma \ref{lem:PSP2}
 implies $(e',f)\in \R_{|S_v}$. Therefore, the primal vertex $w$ meets
 equivalence class $\varphi$ in $S_v$. Note, not every equivalence class of
 $\R_{|S_v}$ must be met by non-primal vertices in $S_v$ in general, as
 one can easily verify by the example in Figure \ref{fig:StarVSn3}.

 It remains to show that every vertex $x\in W$ meets every equivalence
 class of $\R_{|S_v}(W)^*$ in $\cup_{v\in W} S_v$. Assume we have chosen an
 arbitrary vertex $x\in W$, computed $S_x$ and $\R_{|S_x}$. As shown, vertex
 $x$ and all its primal neighbors $y$ in $S_x$ meet every equivalence class
 of $\R_{|S_x}$. Assume $W$ contains more than one vertex. Since $\la W\ra$
 is connected there is a primal vertex $y$ of $x$ that is also contained in
 $W$. Hence, vertex $x$ is a primal neighbor of $y$ in $S_y$ and every
 equivalence class of $\R_{|S_y}$ is met by $y$ as well as by $x$. Let
 $\varphi \eqcl (\R_{|S_x} \cup \R_{|S_y})^*$ be an arbitrary equivalence
 class. Assume neither $x$ nor $y$ meets $\varphi$. Then each edge $f\in
 \varphi$ must be in $F_x$ or $F_y$. Assume $f\in F_y$ then, by construction
 of $S_y$ and Lemma \ref{lem:PSP2}, this edge $f$ is opposite to exactly one
 edge $e\in E_y$ with $(e,f)\in \R_{|S_y}$, and hence $e\in \varphi$, a
 contradiction. Assume now all edges $e\in \varphi$ are only met by $y$ but
 not by $x$, and therefore, $e'=(x,y)\not \in \varphi$. However, since $e$
 and $e'$ are in different equivalence classes of $(\R_{|S_x} \cup
 \R_{|S_y})^*$ they must be in different equivalence classes of $\R_{|S_y}$.
 Hence, $(e,e') \not \in \R^*_y$ and thus, $(e,e') \not \in \R_y \subseteq
 \delta$. Since $e$ and $e'$ are adjacent and, by Condition $(i)$ of
 Definition \ref{def:delta}, the edges $e$ and $e'$ span a unique chordless
 square. Hence, there is an opposite edge $f=(x,w)$ of $e$ in $S_y$ and, by
 Lemma \ref{lem:PSP2} we conclude $(e,f)\in \R_{|S_y}$ and therefore, $f\in
 \varphi$, which implies that $x$ meets $\varphi$, a contradiction. Hence,
 every equivalence class $\varphi \eqcl (\R_{|S_x} \cup \R_{|S_y})^*$ must be
 met by $x$ and $y$. By the same arguments one shows that each primal
 vertex of $S_x$ and $S_y$ meets every equivalence class of
 $(\R_{|S_x} \cup \R_{|S_y})^*$. If $W\setminus \{x,y\} \neq \emptyset$ we
 can choose a primal neighbor $z\in W$ of $x$ or $y$, since $\la W\ra$ is
 connected. By the same arguments as before, one shows that each vertex
 $x,y$, resp. $z$ and each of its primal vertices in $S_x, S_y$, resp.
 $S_z$ meets every equivalence class of $((\R_{|S_x} \cup \R_{|S_y})^* \cup
 \R_{|S_z})^* = (\R_{|S_x} \cup \R_{|S_y} \cup \R_{|S_z})^*$ in $S_x \cup S_y
 \cup S_z$. Therefore, we can traverse $\la W\ra$ in breadth-first search order
 and inductively conclude that every vertex $x\in W$ meets every
 equivalence class of $\R_{|S_v}(W)^*$ in $\cup_{v\in W} S_v$.

 Finally, we observe that each edge in $E_v$ might define one equivalence
 class of $\R_{|S_v}$ for each vertex $v\in W$. Thus, $\R_{|S_v}$ can have at
 most $\Delta$ equivalence classes. Since this holds for all vertices and
 since equivalence classes in $\R_{|S_v}(W)^*$ are combined equivalence
 classes of the respective $\R_{|S_v}$ classes, the number of equivalence
 classes in $\R_{|S_v}(W)^*$ can not exceed $\Delta$.
\end{proof}

In order to prove that each PSP can be isometrically embedded into a
Cartesian product of stars, which is shown in the next theorem, we first
need the following lemma.

\begin{lemma}
	Let  $G = \Box_{i=1}^l G_i$ be the Cartesian product
	of stars. Assume the vertices in each $V(G_i)$ are labeled
	from $0,\dots,|V(G_i)|-1$, where the vertex with label
	$0$ always denotes the star-center of each $G_i$.
	Let $v_G$ be the vertex with coordinates $c(v_G)= (0,\dots,0)$
	Then for any integer $k\geq 0$, the induced closed $k$-neighborhood $\la N_k^G[v_G] \ra$
	is an isometric subgraph of $G$.
	\label{lem:isomSubgraph}
\end{lemma}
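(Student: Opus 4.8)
The plan is to prove that $\langle N_k^G[v_G]\rangle$ is isometric in $G=\Box_{i=1}^l G_i$ by exhibiting, for any two vertices $x,y\in N_k^G[v_G]$, a shortest $x$–$y$ path in $G$ that stays inside $N_k^G[v_G]$. The only thing to verify is the $\le$ direction of $d_{\langle N_k^G[v_G]\rangle}(x,y)=d_G(x,y)$, since the reverse inequality holds for any subgraph. I would work entirely with coordinates: by the Distance Lemma (Lemma~\ref{prop:distlemma}, applied inductively over the $l$ factors), $d_G(x,y)=\sum_{i=1}^l d_{G_i}(c_i(x),c_i(y))$, and likewise $d_G(v_G,x)=\sum_i d_{G_i}(0,c_i(x))$, so membership in $N_k^G[v_G]$ is simply the condition $\sum_i d_{G_i}(0,c_i(x))\le k$.

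The key structural observation about stars is that in a star $\mathbb S$ with center labelled $0$, for any two vertices $a,b$ one has $d_{\mathbb S}(0,a)\le 1$, $d_{\mathbb S}(0,b)\le 1$, and the unique shortest $a$–$b$ path passes through $0$ whenever $a\ne b$ and both are non-central; in all cases every vertex $w$ on a shortest $a$–$b$ path satisfies $d_{\mathbb S}(0,w)\le \max\{d_{\mathbb S}(0,a),d_{\mathbb S}(0,b)\}$. Thus in each factor $G_i$ one can choose a geodesic from $c_i(x)$ to $c_i(y)$ all of whose vertices $w$ satisfy $d_{G_i}(0,w)\le\max\{d_{G_i}(0,c_i(x)),d_{G_i}(0,c_i(y))\}$. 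Concatenating these factor-geodesics (change one coordinate at a time) yields a shortest $x$–$y$ path in $G$; I would then check that every intermediate vertex $z$ on this path lies in $N_k^G[v_G]$. At such a $z$, the coordinate $c_i(z)$ lies on the chosen geodesic in $G_i$, so $d_{G_i}(0,c_i(z))\le\max\{d_{G_i}(0,c_i(x)),d_{G_i}(0,c_i(y))\}\le d_{G_i}(0,c_i(x))+d_{G_i}(0,c_i(y))$; summing over $i$ gives $d_G(v_G,z)\le d_G(v_G,x)+d_G(v_G,y)\le 2k$, which is not good enough directly, so the bound must be sharpened.

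The sharper bound comes from ordering the coordinate changes cleverly: process the factors in which $c_i(z)$ is being moved so that one first moves, within each such factor, \emph{toward} the center if $d_{G_i}(0,c_i(y))<d_{G_i}(0,c_i(x))$ and otherwise away; more carefully, handle the ``toward-$0$'' steps of all factors before the ``away-from-$0$'' steps, and within each factor route through $0$. Then at any intermediate vertex $z$, for each $i$ the quantity $d_{G_i}(0,c_i(z))$ is at most $d_{G_i}(0,c_i(x))$ for the not-yet-processed factors and at most $d_{G_i}(0,c_i(y))$ for the already-processed ones, and in the single factor currently being modified it is $0$ or bounded by one of the two. Hence $d_G(v_G,z)=\sum_i d_{G_i}(0,c_i(z))\le \sum_i\max\{d_{G_i}(0,c_i(x)),d_{G_i}(0,c_i(y))\}$, and since for each $i$ at least one of the two terms is achieved while the coordinates actually being ``spent'' are handled through $0$, one gets $d_G(v_G,z)\le\max\{d_G(v_G,x),d_G(v_G,y)\}\le k$. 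I expect the bookkeeping in this last step — choosing the order of coordinate moves and verifying the per-vertex bound $d_G(v_G,z)\le\max\{d_G(v_G,x),d_G(v_G,y)\}$ — to be the main obstacle; everything else is a direct application of the Distance Lemma and the triviality of geodesics in a star. An induction on $k$ (or on $l$) could also be used to streamline the argument, peeling off one factor at a time, but the coordinate-path construction above is the most transparent route.
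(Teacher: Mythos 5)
Your proposal is correct and follows essentially the same route as the paper's proof: both use the Distance Lemma to reduce everything to coordinates, build an explicit geodesic from $x$ to $y$ one coordinate at a time, and order the moves so that the running distance to $v_G$ first only decreases (toward-center steps) and then only increases (away-from-center steps), whence every intermediate vertex lies within $\max\{d_G(v_G,x),d_G(v_G,y)\}\le k$ of $v_G$. The only cosmetic difference is that the paper performs the ``one-step'' coordinates ($I_0$) at the two ends of the path and routes each remaining coordinate through the star-center in the middle, while you separate globally into a toward-$0$ phase and an away-from-$0$ phase; note that your intermediate bound $\sum_i\max\{d_{G_i}(0,c_i(x)),d_{G_i}(0,c_i(y))\}$ is not itself $\le k$ in general, so the monotonicity argument supplied by your ordering is what actually closes the proof.
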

\begin{proof}
	Let $\la N_k^G[v_G] \ra$ be the induced closed $k$-neighborhood of
	$v_G$ in $G$. Let $x, y \in N_k^G[v_G]$ be arbitrary vertices and let
	$I\subseteq \{1,\dots,l\}$ be the set of positions where $x$ and $y$
	differ in their coordinate. Moreover, let $I_0\subseteq I$ be the set of
	positions where either $x$ or $y$ has coordinate $0$. By the Distance
	Lemma we have
  $d_G(x,y)=\sum_{i\in I_O} 1 + \sum_{i\in I \setminus I_O} 2$.
		
	We now construct a path from $x$ to $y$ that is entirely contained in $N_k^G[v_G]$
	and show that this path is a
	shortest path. Set  $P(x,y)=\emptyset$.
  Let $i\in I_0$ and w.l.o.g. assume $c_i(x) = 0$, otherwise
	we would interchange the role of $x$ and $y$. By definition of the Cartesian
  product there is a vertex $y'$ that is adjacent to
	vertex $y$ with $c_j(y')=c_j(y)$ for all $j\neq i$ and $c_i(y')=0$. By
	the Distance Lemma, we have $d_{G_j}(c_j(v_G), c_j(y)) =
	d_{G_j}(c_j(v_G), c_j(y'))$ for all $j\neq i$ and $d_{G_i}(c_i(v_G),
	c_i(y))=d_{G_i}(0, c_i(y))=1$ and $d_{G_i}(c_i(v_G), c_i(y'))=0$ and
	thus, $d_G(v_G,y') < d_G(v_G,y) \leq k$, which implies that $y'\in
	N_k^G[v_G]$. We assign $(y,y')$ to be an edge of the (so far empty)
	path $P(x,y)$ from $x$ to $y$ 
  and repeat to construct parts of the path from $x$ to $y'$ in the same way
	until all $i\in I_0$ are processed. In this way, we constructed subpaths
	$P(x,v)$ and $P(w,y)$ of $P(x,y)$, both of which are entirely contained in $\la
	N_k^G[v_G] \ra$ and $|P(x,v)| + |P(w,y)|=|I_0|$. We are left to construct
	a path from $v$ to $w$ that is entirely contained in $N_k^G[v]$. Note
	that by construction $v$ and $w$ differ only in the $i$-th position of
	their coordinates where $i \in I\setminus I_0$ and $c_j(v)=c_j(x)=c_j(y)=c_j(w)$
	for all $j\not \in I\setminus I_0$. By the definition of the Cartesian product
	for each $i \in I\setminus I_0$ there are edges
	$(v,v')$, resp. $(v',v'')$ such that $v, v'$ and $v''$ differ
	only in the $i$-th position of their coordinates. Since $0\neq c_i(x) =
	c_i(v)$ and by definition of the Cartesian product it follows that
	$c_i(v')=0$ and $v''$ can be chosen such that $c_i(v'')=c_i(y)=c_i(w)\neq
	0$. By the Distance Lemma and the same arguments as used before it holds
	$d_G(v_G,v')= d_G(v_G,v'') -1 = d_G(v_G,v)-1\leq k$ and hence, $v',v''\in
	N_k^G[v_G]$. Therefore we add the edges $(v,v')$, resp. $(v',v'')$ to
	the path from $x$ to $y$, remove $i$ from $I\setminus I_0$ and repeat
	this construction for a path from $v''$ to $w$ until $I\setminus I_0$ is
	empty.

	Hence we constructed a path of length $|I_0| + 2|I\setminus I_0| =
	\sum_{i\in I_O} 1 + \sum_{i\in I \setminus I_O} 2 =d_G(x,y)$. Thus, this
	path is a shortest path from $x$ to $y$. Since this construction can be
	done for any $x,y \in N_k^G[v_G]$ we can conclude that $\la N_k^G[v_G]
	\ra $ is an isometric subgraph of $G$.
\end{proof}

\begin{theorem}
	Let $G=(V,E)$ be an arbitrary given graph and $S_v$ be a PSP for some
	vertex $v\in V$. Let $H = \Box_{i=1}^k \mathbb S_i$ be the Cartesian
	product of the star factors as in Definition \ref{def:starfactor}. Then
	it holds:
	\begin{enumerate}
	\item[(1)] $S_v$ is an isometric subgraph of $H$ and
				in particular, $S_v\simeq \la N^H_2[(v_1,\dots, v_k)]\ra$
				where $v_i$ denotes the star-center of $\mathbb S_i$, $i=1,\dots, k$.
	\item[(2)] $\R_{|S_v}\subseteq \delta(H)^* \subseteq \sigma(H)$.
	\item[(3)] The product relation $\sigma(H)$ has the same number of equivalence classes as $\R_{|S_v}$.
	\end{enumerate}
	\label{thm:isomSubgraph}
\end{theorem}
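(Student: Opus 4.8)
The plan is to build an explicit coordinatization of $S_v$ from the equivalence classes of $\R_v^*$, show it realizes $S_v$ inside $H = \Box_{i=1}^k \mathbb S_i$ as exactly the induced $2$-neighborhood of the all-centers vertex, and then transport the relation statements across this embedding using the results already available for $\la N_k^H[v_G]\ra$.

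First I would set up the map $\gamma: V(S_v) \to V(H)$. The center $v$ maps to $(v_1,\dots,v_k)$, where $v_i$ is the star-center of $\mathbb S_i$, i.e.\ the vertex labeled $0$. A primal vertex $w$ with $(v,w)\in E(S_v)$ lies in exactly one class $\varphi_i \eqcl \R_v^*$ (the classes partition $E_v$), so $w\in V(\mathbb S_i)$; I send $w$ to the vertex of $H$ that agrees with $\gamma(v)$ in every coordinate except the $i$-th, where it takes the label of $w$ in $\mathbb S_i$. A non-primal vertex $x\in V(S_v)$ is, by Corollary \ref{cor:PSP1}(3), the unique top vertex of a unique chordless square spanned by primal edges $e=(v,u)\in\varphi_i$ and $f=(v,w)\in\varphi_j$ with $i\neq j$; I send $x$ to the vertex of $H$ whose $i$-th coordinate is that of $\gamma(u)$, whose $j$-th coordinate is that of $\gamma(w)$, and all other coordinates those of $\gamma(v)$. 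The first task is to check this is well-defined (independence of the choice of square for $x$, which follows from uniqueness in Lemma \ref{lem:PSP1} and Lemma \ref{lem:PSP2}), injective, and a graph homomorphism in both directions onto $\la N_2^H[\gamma(v)]\ra$: every edge of $S_v$ changes exactly one coordinate (primal edges change one, non-primal edges $f\in F_v$ change the coordinate of the unique primal edge they are opposite to, by Lemma \ref{lem:PSP2}), and conversely every edge of $\la N_2^H[\gamma(v)]\ra$ arises this way because star factors are stars and the square property forces the expected squares in $S_v$. This gives $S_v \simeq \la N_2^H[\gamma(v)]\ra$, and Lemma \ref{lem:isomSubgraph} with $k=2$ then yields that $\la N_2^H[\gamma(v)]\ra$, hence $S_v$, is an isometric subgraph of $H$; that is part (1).

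For part (2), under the identification from (1), a pair $(e,f)\in\R_{|S_v}$ means $e,f$ were forced together through a chain in $\R_v$; each link of such a chain is either two adjacent edges spanning no unique square (Def.\ \ref{def:delta}(i)) or two opposite edges of a chordless square (Def.\ \ref{def:delta}(ii)). I would argue that in each case the two edges receive the same Cartesian color in $H$: a non-primal edge $f\in F_v$ shares the color of its unique opposite primal edge in $E_v$ by construction of $\gamma$, and two primal edges in the same $\R_v^*$-class lie in the same star factor $\mathbb S_i$ by definition, hence have color $i$ in $H$. So $\R_{|S_v}$ identifies only edges of equal Cartesian color in $H$, i.e.\ $\R_{|S_v}\subseteq\sigma(H)$; combined with the fact (used for (3)) that corresponding squares and non-unique-square configurations survive in $H$, one gets $\R_{|S_v}\subseteq\delta(H)^*$, and $\delta(H)^*\subseteq\sigma(H)$ is the already-quoted inclusion.

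For part (3), I would count both sides. The relation $\sigma(H)$ has exactly $k$ classes, one per factor $\mathbb S_i$ (each $\mathbb S_i$ is a star, hence prime, so the $\mathbb S_i$ really are the prime factors up to possible $K_1$'s—and no $\mathbb S_i$ is $K_1$ since each $\varphi_i$ is nonempty). And $\R_{|S_v}$ has exactly $k$ classes: by Lemma \ref{lem:vMeetsEveryClass} (applied with $W=\{v\}$) $v$ meets every class of $\R_{|S_v}$ via a primal edge, so the number of classes of $\R_{|S_v}$ is at most the number of $\R_v^*$-classes $=k$; conversely, distinct $\varphi_i$ contain primal edges that are not $\R_v^*$-related and therefore (they span a unique chordless square, Lemma \ref{lem:PSP1}, so are not in $\delta$, so not in $\R_v$) stay in distinct $\R_{|S_v}$-classes. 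Hence both have $k$ classes. I expect the main obstacle to be the well-definedness and surjectivity argument in part (1)—pinning down that the $\gamma$-images of non-primal vertices do not collide and that \emph{every} edge and square of $\la N_2^H[\gamma(v)]\ra$ is actually present in $S_v$ (not just that $S_v$ embeds)—which is where Lemma \ref{lem:PSP1}, Lemma \ref{lem:PSP2}, Corollary \ref{cor:PSP1} and the Square Property all have to be combined carefully; the relation-theoretic parts (2) and (3) are then largely bookkeeping on top of that structural identification.
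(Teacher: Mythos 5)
Your proposal follows essentially the same route as the paper: the same explicit coordinate map $\gamma$ sending the center to the all-zero vertex, primal vertices into single factors, and unique top vertices to the corresponding pairs of nonzero coordinates, with bijectivity onto $\la N_2^H[v_H]\ra$ checked via Lemma \ref{lem:PSP1} and the Square Property and isometry supplied by Lemma \ref{lem:isomSubgraph}, followed by the same class-counting for assertion (3). The only cosmetic difference is in (2), where you first read off $\R_{|S_v}\subseteq\sigma(H)$ from the coloring and then recover membership in $\delta(H)^*$, while the paper runs a three-case analysis (primal/primal, non-primal/non-primal, mixed) directly into $\delta(H)^*$ and only afterwards notes $\delta(H)^*=\sigma(H)$; both are sound.
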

\begin{proof}
	\noindent\textbf{Assertion (1):}\\
	If $\R_v^*$ has only one equivalence class, then there is nothing to show,
	since $S_v \simeq \mathbb S_1 \simeq H$. Therefore, assume $\R_v^*$ has
	$k\geq 2$ equivalence classes. 	

	In the following we define a mapping $\gamma:V(S_v) \rightarrow V(H)$ and
	show that $\gamma$ is an isometric embedding. In particular we show that
	$\gamma$ is an isomorphism from $S_v$ to the $2$-neighborhood $\la
	N_2^H[v_H] \ra$ for a distinguished vertex $v_H\in V(H)$. Lemma
	\ref{lem:isomSubgraph} implies then that this embedding is isometric.

\begin{figure}[tbp]
  \centering
  \includegraphics[bb= 82 422 452 647, scale=0.6]{./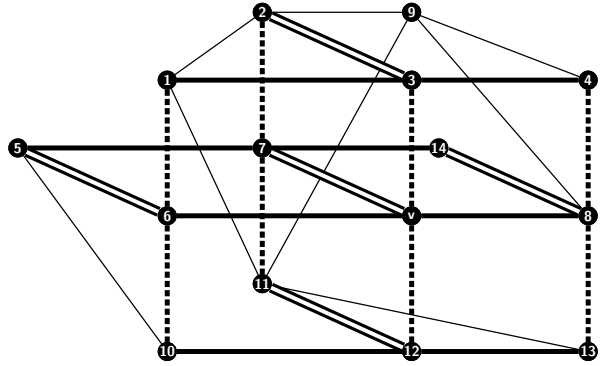}
 \caption{Shown is a graph $G\simeq \la N^G_2[v]\ra$. Note, $\delta(G)^*$ has
						one equivalence class. The partial star product (PSP) $S_v$ is the
						subgraph that consists of thick, double-lined and dashed edges. Moreover, $S_v$
						can be isometrically embedded into the Cartesian product
						of a star with two and two stars with
						three vertices. The three equivalence classes of
						$\R_{|S_v}$ are highlighted by thick, double-lined, resp. dashed edges.}
	\label{fig:StarVSn3}
\end{figure}

	For a given equivalence class $\varphi_i \eqcl \R_v^*$ let
	$N_{\varphi_i}(v) =\{v_1,\dots,v_l\}$ be the $\varphi_i$-neighborhood of
	the center $v$ and $\mathbb S_i$ be the corresponding star factor with
	vertex set $V(\mathbb S_i) = \{0,1,\dots,l\}$ and edges $(0,x)\in
	E(\mathbb S_i)$ for all $(v, v_x)\in S_v$. Let $H = \Box_{i=1}^k \mathbb
	S_i$ be the Cartesian product of the star factors. The center $v$ of
	$S_v$ is mapped to the vertex $v_H\in V(H)$ with coordinates $c(v_H) =
	(0,\dots,0)$, the vertices $v_j \in N_{\varphi_i}(v)$ are mapped to the
	unique vertex $u$ with coordinates $c_r(u) = 0$ for all $r\neq i$
	and $c_i(u) = j$. Clearly, these vertices exist, due to the construction
	of $\mathbb S_1, \dots, \mathbb S_k$ and since $V(H) = \times_{i=1}^k
	V(\mathbb S_i)$. Note, that these vertices we mapped onto are entirely
	contained in the 1-neighborhood $N^H[v_H]$ of $v_H$.
	Now let $x$ be a non-primal vertex in $S_v$. Hence, by Lemma
	\ref{lem:PSP1} and Corollary \ref{cor:PSP1}, there is a unique chordless
	square $\la\{v, v_i, x, v_j\} \ra$ in $S_v$ with unique top vertex $x$.
	Thus, $v_i$ and $v_j$ are the only common neighbors of $x$ in $S_v$.
	Moreover, by definition and Lemma \ref{lem:PSP1}, the edges $(v,v_i)\in
	\varphi_r$ and $(v,v_j)\in \varphi_s$ are in different equivalence classes, i.e.,
	$r \neq s$. Thus, we map $x$ to the unique vertex $u$ with coordinates
	$c_l(u) = 0$ for all $l\neq r,s$ and $c_r(u) = i$ and $c_s(u)= j$. Again,
	this vertex exists, due to the construction of $\mathbb S_1, \dots,
	\mathbb S_k$ and since $V(H) = \times_{i=1}^k V(\mathbb S_i)$. This
	completes the construction of our mapping $\gamma$.

	We continue to show that the mapping $\gamma: V(S_v)\rightarrow
	N_2^H[v_H]$ is bijective. It is easy to see that by construction and
	the definition of the Cartesian product, each primal vertex $x$ has a unique
	partner $\gamma(x)$ in $N_1^H[v_H]$ and vice versa. We show that this
	holds also for non-primal vertices in $S_v$ and vertices in $N_2^H[v_H]
	\setminus N_1^H[v_H]$. First assume there are two non-primal vertices $x$
	and $x'$ in $S_v$ that are mapped to the same vertex $u$ in $H$. Thus, by
	construction of our mapping $\gamma$, the vertex $x'$ must have the same
	primal neighbors $v_i$ and $v_j$ as $x$ in $S_v$. However, by Lemma
	\ref{lem:PSP1} this contradicts that $(v,v_i)\in \varphi_r$ and
	$(v,v_j)\in \varphi_s$ span a unique square. Therefore, $\gamma$ is
	injective. Now, let $u\in N_2^H[v_H] \setminus N_1^H[v_H]$ be an
	arbitrary vertex in $H$. By the Distance Lemma we can conclude that
	$d_H(v_H,u) = \sum_{i=1}^k d_{\mathbb S_i}(0,c_i(u))$. Moreover, since
	$d_H(v_H,u)=2$ and $d_{\mathbb S_i}(0,c_i(u))\leq 1$ for all
	$i=1,\dots,k$ we can conclude that $d_H(v_H,u) =d_{\mathbb S_r}(0,c_r(u))
	+ d_{\mathbb S_s}(0,c_s(u))$ for some distinct indices $r$ and $s$.
	Assume that $c_r(u)=i$ and $c_s(u)=j$. By construction, the star factor
	$\mathbb S_r$ contains the edge $(0,i)$ and $\mathbb S_s$ the edge
	$(0,j)$. Hence, there are edges $e= (v,v_i) \in \varphi_r$ and $f=(v,v_j)
	\in \varphi_s$ in $S_v$. Lemma \ref{lem:PSP1} implies that there is a
	unique chordless square spanned by $e$ and $f$ with unique top vertex $y$
	that is also contained in $S_v$. By construction of $\gamma$ the vertex
	$y$ is the unique vertex that is mapped to vertex $u$ in $H$. Since this
	holds for all vertices $u\in N_2^H[v_H] \setminus N_1^H[v_H]$, and by the
	preceding arguments, we can conclude that the mapping $\gamma:S_v
	\rightarrow N_2^H[v_H]$ we defined is bijective.

	It remains to show that $\gamma$ is an isomorphism from $S_v$ to
	$N_2^H[v_H]$. By construction, every primal edge $(v,v_j) \in \varphi_r$
	is mapped to the edge $(v_H,x)$, where $x$ has coordinates $c_i(x)=0$ for
	$i\neq r$ and $c_r(x) = j$. Hence, $(v,v_j)\in E_v$ if and only
	if $(\gamma(v),\gamma(v_j))\in E(\la N_2^H[v_H]\ra)$. Now suppose we have
	a non-primal edge $(v_j,y) \in \varphi_r$. By Lemma \ref{lem:PSP1}, there
	is a unique chordless square with edges $(v,v_l)\in \varphi_r$ and
	$(v,v_j)\in \varphi_s$ and hence, by construction of $\mathbb S_r$ and
	$\mathbb S_s$ and the definition of the Cartesian product, there are edges
	$e=(v_H,z)$ and $f=(v_H,z')$ in $H$ where $z$ differs from $v_H$ in the
	$r$-th position of its coordinate and $z'$ differs from $v_H$ in the $s$-th
	position of its coordinate. By the Square Property, there is unique
	chordless square in $H$ spanned by $e$ and $f$ with top vertex $y'$ that
	has coordinates $c_i(y')=0$ for $i\neq r,s$, $c_r(y')=l\neq 0$ and
	$c_s(y')=j\neq 0$. By the construction of $\gamma$ we see that $(v_j,y)\in
	F_v$ implies $(\gamma(v_j),\gamma(y))=(z',y')\in E(\la N_2^H[v_H]\ra)$. Using
	the same arguments, but starting from squares spanned by $e=(v_H,z)$ and
	$f=(v_H,z')$ in $H$, one can easily derive that $(z',y')\in E(\la
	N_2^H[v_H]\ra)$ implies $(\gamma^{-1}(z'),\gamma^-1(y'))=(v_j,y)\in F_v$.

	Finally, Lemma \ref{lem:isomSubgraph} implies that $\la N_2^H[v_H]\ra$ is
	an isometric subgraph of $H$ and therefore, $\gamma:V(S_v) \rightarrow
	V(H)$ is an isometric embedding.

\noindent\textbf{Assertion (2) and (3):}\\ 
	By Assertion (1), 
	we can treat the graph $S_v$ as subgraph of $H$; $S_v\subseteq H$.
	We continue to show that $\R_{|S_v}=\R^*_{v|S_v}\subseteq\delta(H)^*$.
	Let $v\in V(G)$ be the
	center of the PSP $S_v$, and $H = \Box_{i=1}^k \mathbb S_i$, where
	$\mathbb S_i$ are the corresponding star factors (w.r.t. $S_v$). Let
	$e,f\in E(S_v)$ such that $(e,f)\in \R_{|S_v}$.
	There are three cases to consider; either
	$e,f\in E_v$, or $e,f\in F_v$, or $e\in E_v$ and $f\in F_v$.

	If $e,f	\in E_v$ are both primal edges with $(e,f)\in \R_{|S_v}$
  then $e$ and $f$ are by construction of the star factors and $H$
  contained in the layer $\mathbb S_i^v$ of some star factor $\mathbb S_i$.
	Corollary \ref{cor:PSP1} and $(e,f)\in \R_{|S_v}\subseteq \R_v^*$
  imply that $e$ and $f$ span no square in $S_v$.
	Since $H = \Box_{i=1}^k \mathbb S_i$ we can conclude that
	$e$ and $f$ span no square in $H$ and hence, $(e,f)\in \delta(H)$.

  Assume $e,f	\in F_v$ and  $(e,f)\in \R_{|S_v}$.
  By Lemma \ref{lem:PSP2} it holds that $e$, resp., $f$ is opposite to exactly one
	primal edge $e'\in E_v$, resp., $f'\in E_v$ in $S_v$ where $(e,e'),(f,f')\in \R_{|S_v}$.
  Since $S_v\subseteq H$, 
	the edge $e$ is the opposite edge of $e'$ and  $f$ is the opposite edge of $f'$ in 
	a square which is also contained in $H$. Since $S_v$ is an isometric subgraph of $H$
  we can conclude that this square is chordless in $H$
  and thus  $(e,e'), (f,f')\in\delta(H)$. 
	Since $\R_{|S_v}$ is transitive it holds, $(e',f')\in \R_{|S_v}$.
	By analogous arguments as before we have $(e',f')\in \delta(H)$ and
  therefore, $(e,f)\in \delta^*(H)$.

 Finally, suppose $e\in E_v$ is a primal edge, $f\in F_v$ is
	non-primal and $(e,f)\in \R_{|S_v}$.
  By Lemma \ref{lem:PSP2}, $f$ is opposite to exactly one
	primal edge $e'$ where $(f,e')\in \R_{|S_v}$. If $e=e'$, then $e$ and $f$
	are opposite edges in a chordless square in $S_v$.
  By analogous arguments as before, we can conclude that
	this square is chordless in $H$ and hence, $e,f\in \delta(H)$.
   If $e\neq e'$, then $(e,f),(f,e')\in \R_{|S_v}$
	implies that $(e,e')\in \R_{|S_v}$
  and we can conclude from Corollary
	\ref{cor:PSP1} that there is no square spanned by $e$ and $e'$ in $S_v$.
	Again $e$ and $e'$ lie in common layer $\mathbb S_i^v$ and do not span
	any square in $H$. Thus we have $(e,e')\in\delta(H)$. Again,
	since $e'$ and $f$ are opposite edges in a chordless square in $H$
  we can conclude that $(e',f)\in\delta(H)$.
  Consequently, $\R_{|S_v}\subseteq\delta^*(H)$.
  Note, by results of Imrich \cite{Imrich:94} we have $\delta(H)^* \subseteq \sigma(H)$.
	It is easy to see that the connected components of $\delta(H)^*$ w.r.t. to a fixed equivalence
	class $i$ correspond to the layers of the factor $\mathbb S_i$. Therefore, we can conclude that
	$\delta(H)^* = \sigma(H)$.
	Hence, we have
   $$\R_{|S_v}=\R^*_{v|S_v}\subseteq\delta(H)^* = \sigma(H).$$

  Moreover, by Definition \ref{def:starfactor} of the star factors and since stars are prime,
	the number of $\R_{|S_v}$ classes equals the number of
	prime factors of $H$. Hence, it holds that $\R_{|S_v}$ and $\sigma(H)$ have the
	same number of equivalence classes.
\end{proof}

By the construction of star factors, the Distance Lemma and Theorem
\ref{thm:isomSubgraph}, we can directly infer the next corollary.

\begin{corollary}
	Let $G=(V,E)$ be an arbitrary given graph, $S_v$ be a PSP for some vertex $v\in V$
	and $\R_v^*$ have $k=1$ or $2$ equivalence classes.
	Then $$S_v \simeq \Box_{i=1}^k \mathbb S_i.$$
\end{corollary}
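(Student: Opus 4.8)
The plan is to obtain this corollary as an immediate consequence of Theorem \ref{thm:isomSubgraph}(1), which already supplies an isometric embedding $S_v \hookrightarrow H$ with $H = \Box_{i=1}^k \mathbb S_i$ together with the stronger statement $S_v \simeq \la N_2^H[v_H]\ra$, where $v_H = (v_1,\dots,v_k)$ and $v_i$ is the star-center of $\mathbb S_i$. So the only thing left to verify is that, when $k=1$ or $k=2$, the induced closed $2$-neighborhood $\la N_2^H[v_H]\ra$ is in fact all of $H$; then $S_v \simeq \la N_2^H[v_H]\ra = H = \Box_{i=1}^k \mathbb S_i$.

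For $k=1$ there is nothing to do: $H \simeq \mathbb S_1$ is a star with center $v_1$, every vertex lies within distance $1$ of $v_1$, and this case was already noted at the very beginning of the proof of Theorem \ref{thm:isomSubgraph}. For $k=2$ I would invoke the Distance Lemma (Lemma \ref{prop:distlemma}): for an arbitrary vertex $u=(u_1,u_2)\in V(H)=V(\mathbb S_1)\times V(\mathbb S_2)$ we have $d_H(v_H,u)=d_{\mathbb S_1}(v_1,u_1)+d_{\mathbb S_2}(v_2,u_2)$. Since each $\mathbb S_i$ is a star centered at $v_i$, each summand is at most $1$, so $d_H(v_H,u)\le 2$ and hence $u\in N_2^H[v_H]$. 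Thus $N_2^H[v_H]=V(H)$ and $\la N_2^H[v_H]\ra = H$, which combined with Theorem \ref{thm:isomSubgraph}(1) gives $S_v\simeq \Box_{i=1}^k \mathbb S_i$.

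I do not expect a genuine obstacle here; the proof is essentially a one-line distance count. The only minor point worth a remark is the degenerate possibility that some $\mathbb S_i$ is a single vertex $K_1$, but then $d_{\mathbb S_i}(v_i,u_i)=0\le 1$ and the bound is unaffected (consistently with $K_1$ acting as a unit for $\Box$). It is also worth pointing out why the hypothesis $k\in\{1,2\}$ is essential: for $k\ge 3$ the product $H$ contains vertices at distance $k>2$ from $v_H$, so $\la N_2^H[v_H]\ra \subsetneq H$ in general, and $S_v$ is then only an isometric subgraph of $\Box_{i=1}^k\mathbb S_i$ rather than isomorphic to it (compare Figure \ref{fig:StarVSn3}).
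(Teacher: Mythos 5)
Your argument is exactly the one the paper intends: it states the corollary as a direct consequence of the construction of the star factors, the Distance Lemma, and Theorem \ref{thm:isomSubgraph}, and your distance count showing $N_2^H[v_H]=V(H)$ for $k\le 2$ is precisely the missing one-line verification. The proposal is correct and takes essentially the same route.
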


We conclude this section with a last theorem which shows that the
transitive closure of the union $\R_{|S_v}(V)$ over all vertices and its
relations $\R_v$, even restricted to $S_v$, is $\delta(G)^*$.

\begin{theorem}
	Let $G=(V,E)$ be a given graph and
	$\R_{|S_v}(V) = \cup_{v_\in V} \R_{|S_v}$.
	Then $$\R_{|S_v}(V)^* = \delta(G)^*.$$
\label{thm:union_equals_delta}
\end{theorem}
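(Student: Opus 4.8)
The plan is to prove the two inclusions of $\R_{|S_v}(V)^* = \delta(G)^*$ separately. The inclusion $\R_{|S_v}(V)^* \subseteq \delta(G)^*$ is essentially free: the paper already records $\R_v^* \subseteq \delta(G)^*$ for every $v \in V$, so $\R_{|S_v} \subseteq \R_v^* \subseteq \delta(G)^*$ and hence $\R_{|S_v}(V) = \bigcup_{v\in V}\R_{|S_v} \subseteq \delta(G)^*$; since $\delta(G)^*$ is an equivalence relation, its transitive closure stays inside it. I would also note in passing that $\R_{|S_v}(V)$ is reflexive and symmetric --- by Lemma~\ref{lem:restrictionEquirel}, and because every edge $e$ lies in $E(S_w)$ for an endpoint $w$ of $e$ --- so that $\R_{|S_v}(V)^*$ is genuinely an equivalence relation.

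For the reverse inclusion it suffices to show $\delta(G)\subseteq\R_{|S_v}(V)^*$, which then propagates to transitive closures. I would take $(e,f)\in\delta(G)$ and argue by the three conditions of Definition~\ref{def:delta}. Condition~(iii) is trivial. Under Condition~(i) the edges $e$ and $f$ share a vertex $w$, both lie in $E_w\subseteq E(S_w)$, and $(e,f)\in\R_w$ by the definition of the local relation, hence $(e,f)\in\R_{|S_w}\subseteq\R_{|S_v}(V)$.

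The real work is Condition~(ii), where $e$ and $f$ are opposite edges of a chordless square and therefore share no vertex, so no individual local relation $\R_w$ can relate them directly. The idea is to route through a side edge. Fix a witnessing chordless square $\la\{a,b,c,d\}\ra$ with edges $ab,bc,cd,da$, where $e=ab$, $f=cd$, and side edges $g=bc$, $h=da$. At vertex $a$ the primal edges $e=(a,b)$ and $h=(a,d)$ span this chordless square; if $(e,h)\notin\R_a^*$, then by the construction of $S_a$ the opposite edges $f$ and $g$ lie in $F_a$, so the whole square sits inside $S_a$ and $f$ is opposite to the primal edge $e$ there, whence Lemma~\ref{lem:PSP2} gives $(e,f)\in\R_{|S_a}\subseteq\R_{|S_v}(V)$. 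Symmetrically, at vertex $d$ the primal edges $h$ and $f$ span the square, and $(h,f)\notin\R_d^*$ yields $(e,f)\in\R_{|S_d}$. If both of these non-membership conditions fail, then $(e,h)\in\R_a^*$ with $e,h\in E_a$ gives $(e,h)\in\R_{|S_a}$, and likewise $(h,f)\in\R_{|S_d}$; transitivity of $\R_{|S_v}(V)^*$ then produces $(e,f)$. Thus $(e,f)\in\R_{|S_v}(V)^*$ in every case, giving $\delta(G)\subseteq\R_{|S_v}(V)^*$ and hence $\delta(G)^*\subseteq\R_{|S_v}(V)^*$.

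I expect the main obstacle to be exactly this Condition~(ii) step: one must pick the right connecting edge and then check carefully, via the definition of $F_a$ (resp.\ $F_d$) and Lemma~\ref{lem:PSP2}, that the witnessing square genuinely lies within the PSP $S_a$ (resp.\ $S_d$) --- all four of its edges being primal or non-primal there --- so that the ``unique opposite primal edge'' statement of Lemma~\ref{lem:PSP2} applies. Once that is in place, the remaining arguments are routine unwinding of the definitions of $\R_v$, $\R_{|S_v}$, and transitive closure.
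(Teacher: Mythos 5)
Your proof is correct and follows essentially the same route as the paper: the easy inclusion via $\R_{|S_v}\subseteq\R_v^*\subseteq\delta(G)^*$, and for $\delta(G)\subseteq\R_{|S_v}(V)^*$ the same routing of an opposite-edge pair through a side edge of the chordless square, using the definition of $F_v$ together with Lemma~\ref{lem:PSP2} when the square lies inside a PSP, and transitivity through the side edge otherwise. The only difference is presentational: you give a direct case analysis where the paper phrases the same trichotomy as a proof by contradiction.
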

\begin{proof}
	By definition $\R_v \subseteq \delta(G)$.
	Moreover, by definition and Lemma
	\ref{lem:restrictionEquirel} it holds 
	that $\R_{|S_v}\subseteq \R_v^* \subseteq \delta(G)^*$ for all $v\in
	V(G)$. Thus, $\R_{|S_v}(V) \subseteq \delta(G)^*$, and hence
	$\R_{|S_v}(V)^*\subseteq \delta(G)^*$.		

	Let $e,f\in E(G)$ be edges that are in relation $\delta(G)$. By
	definition, $(e,f)\in \R_v$ for some $v\in V(G)$. If $e=(u,v)$ and
	$f=(w,v)$ are adjacent, then $e$ and $f$ are contained in the set $E_v$
	of $S_v$ and therefore in $\R_{|S_v} \subseteq \delta(G)^*$. Assume,
	$e=(u,v)$ and $f=(x,y)$ are opposite edges of a chordless square
	containing the edges $e,f$ and $g=(v,x)$. For contradiction, assume
	$(e,f)\not\in \R_{|S_v}(V)^*$ and hence $(e,f)\not\in \R_{|S_v}(V)$. Thus,
	for each $v\in V$ we have $(e,f)\not\in \R_{|S_v}$ and therefore, by
	definition, there is no square spanned by edges $e,e'\in E_v$ with
	$(e,e')\not\in \R_v^*$ such that $f$ is the opposite edge of $e$. In
	particular, this implies $(e,g)\in \R_v^*$ and hence $(e,g)\in \R_{|S_v}$.
	Analogously, one shows that $(f,g)\in \R_{|S_x}$. Since $\R_{|S_v} \cup
	\R_{|S_x} \subseteq \R_{|S_v}(V)$ we can infer that $(e,f) \in
	\R_{|S_v}(V)^*$, a contradiction.
\end{proof}

Theorem \ref{thm:union_equals_delta} allows us to provide covering algorithms for the
recognition of $\delta(G)^*$ or of $\delta(H)^*$ for subgraphs $H\subseteq
G$ that are based only on coverings by partial star products. Note, if
$\sigma(G) = \delta(G)^*$, then the covering of $G$ by partial star products
would also lead to a valid prime factorization. However, as most graphs are
prime we will in the next section provide algorithms, based on factorizable
parts, i.e., of coverings where the PSP's have more than one equivalence
class $\R_{|S_v}$, which can be used to recognize approximate products.

\section{Recognition of Relations, Colorings and Embeddings into Cartesian Products}
	
In order to compute local colorings based on partial star products and to
compute coordinates that respect this coloring
we begin with algorithms for the recognition of
$\R_{|S_v}(W)^*$ and $\delta(G)^*$.

\begin{lemma}
 Given a graph $G=(V,E)$ with maximum degree $\Delta$ and a
 subset $W\subseteq V$ such that $\la W \ra$ is
 connected, then Algorithm \ref{alg:LocalRW} computes
 $\R_{|S_v}(W)^*$ and $\cup_{v\in W} S_v$ in $O(|V|\Delta^4)$ time.
\label{lem:algoLocalRW}
\end{lemma}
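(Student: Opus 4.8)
The plan is to analyze Algorithm \ref{alg:LocalRW} (which has not yet been displayed in the excerpt, but whose task is fixed by the statement: it must build $\cup_{v\in W}S_v$ and the partition induced by $\R_{|S_v}(W)^*$). I would structure the argument in two parts: correctness and running time. For correctness I would argue vertex by vertex. Fix $v\in W$. Since $\R_v=((E_v\times E)\cup(E\times E_v))\cap\delta(G)$, every pair in $\R_v$ has an endpoint edge incident to $v$, so $\R_v$ can be determined by inspecting only the induced $2$-neighborhood $\la N_2^G[v]\ra$: for every pair $e,e'\in E_v$ one checks whether they span a unique chordless square (Definition \ref{def:delta}(i)/(ii)), which only involves common neighbours of the endpoints, i.e. vertices in $N_2^G[v]$. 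Hence the algorithm can compute $\R_v$ locally, take its transitive closure $\R_v^*$ on $E_v$ (a union–find / relabeling step on at most $\Delta$ classes), then read off $F_v$ as the opposite edges of the unique squares spanned by edges in distinct $\R_v^*$-classes, giving $S_v$ and $\R_{|S_v}$ by Definition \ref{def:starproduct}, Lemma \ref{lem:PSP1} and Lemma \ref{lem:PSP2}. Taking the union over $v\in W$ of the graphs $S_v$ and unioning the relations, then forming the transitive closure, yields $\cup_{v\in W}S_v$ and $\R_{|S_v}(W)^*$ by definition; connectedness of $\la W\ra$ is used (as in Lemma \ref{lem:vMeetsEveryClass}) only to guarantee the combined classes behave coherently, but the set-theoretic identity $\R_{|S_v}(W)^*=\big(\cup_{v\in W}\R_{|S_v}\big)^*$ holds regardless.

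For the running time I would bound the work per vertex $v\in W$ by $O(\Delta^4)$ and then sum over the at most $|V|$ vertices of $W$. The neighbourhood $N_2^G[v]$ has $O(\Delta^2)$ vertices and $O(\Delta^3)$ edges (each of the $O(\Delta^2)$ vertices has degree $\le\Delta$); extracting $\la N_2^G[v]\ra$ costs $O(\Delta^3)$. The dominating step is deciding, for each of the $O(\Delta^2)$ pairs $e,e'\in E_v$, whether they span a unique chordless square: for a pair $e=(v,u),e'=(v,w)$ one must look at the common neighbours of $u$ and $w$, of which there are $O(\Delta)$, and for each candidate top vertex verify chordlessness in $O(\Delta)$ — so $O(\Delta^2)$ per pair, $O(\Delta^4)$ over all pairs. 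Computing $\R_v^*$ on $\le\Delta$ classes, extracting $F_v$ (at most $O(\Delta^2)$ non-primal edges, each found while processing a pair), building $S_v$, and merging into the running union and union–find structure are all $O(\Delta^2)$ or $O(\Delta^3)$ and thus dominated. The final transitive closure of $\R_{|S_v}(W)$ is a union–find over $O(|V|\Delta)$ edges grouped into at most $\Delta$ classes (Lemma \ref{lem:vMeetsEveryClass}), which is $O(|V|\Delta)$ up to the inverse-Ackermann factor and again dominated. Summing, the total is $O(|V|\Delta^4)$.

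The main obstacle I anticipate is the careful accounting of the square-detection step: one has to make sure that testing "unique chordless square spanned by $e,e'$" — including the uniqueness of the top vertex, $|N[x]\cap N[v]|=2$ — really costs only $O(\Delta^2)$ per pair and that this information, computed once, suffices to produce both $\R_v$ and $F_v$ without recomputation; here Lemma \ref{lem:PSP1} is essential, since it guarantees that a pair $e,e'\in E_v$ with $(e,e')\notin\R_v^*$ spans a \emph{unique} square with a \emph{unique} top vertex, so each non-primal edge of $S_v$ is discovered exactly once. A secondary subtlety is that $\R_v^*$ must be computed \emph{before} $F_v$ (since $F_v$ is defined via the $\R_v^*$-classes of $E_v$), so the algorithm's two-phase structure per vertex must be respected in the complexity bookkeeping; I would make this ordering explicit in the write-up. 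Once these points are pinned down, combining them with the per-vertex bound and the outer sum over $|W|\le|V|$ vertices gives the claimed $O(|V|\Delta^4)$ bound.
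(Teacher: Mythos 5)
Your proposal is correct and follows essentially the same route as the paper: correctness via the locality of $\R_v$ to $\la N_2^G[v]\ra$ followed by union and transitive closure, and complexity via an $O(\Delta^4)$ per-vertex bound summed over $|W|\leq |V|$ vertices, with the closure step dominated. The only differences are implementation details --- you use union--find where the paper builds an auxiliary ``color graph'' and takes its connected components, and you enumerate squares directly over pairs of edges at $v$ where the paper invokes the Chiba--Nishizeki listing result; your direct accounting of the square-detection cost is, if anything, slightly more careful than the paper's.
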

\begin{proof}
 The Algorithm scans the vertices in an arbitrary order and computes
 $\langle N_2^G[v] \rangle$, $\delta'=\delta(\langle N_2^G[v] \rangle)$, as
 well as $S_v$ and $\R_{|S_v}$ w.r.t. $\delta'$. In order to compute the
 transitive closure of $\R_{|S_v}(W)$ an auxiliary graph, the color graph
 $\Gamma$, is introduced. For each vertex $v$ and to each equivalence class
 of $\R_{|S_v}$ some unique color is assigned, and $\Gamma$ keeps track of
 the ``colors'' of the equivalence classes. All vertices of $\Gamma$ are
 pairs $(e,c)$. Two vertices $(e',c')$ and $(e'',c'')$ are connected by an
 edge if and only if there is an edge $e \in \varphi_{c'} \cap
 \varphi_{c''}$ with $\varphi_{c'}\eqcl \R_{|S_u}$ and $\varphi_{c''}\eqcl
 \R_{|S_w}$ for some $u,w\in W$. In other words, if there is an edge $e$
 that obtained both, color $c'$ and $c''$. Edges in $\Gamma$ ``connect''
 edges of local equivalence classes that belong to the same global
 equivalence classes in $\R_{|S_v}(W)^*$. The connected components $Q$ of
 $\Gamma$ define edge sets $E_Q = \cup_{(e,c)\in Q} \varphi_c$. We
 therefore can identify the transitive closure of $\R_{|S_v}(W)^*$ by
 defining $e\in \varphi_Q \eqcl \R_{|S_v}(W)^*$ if $e\in E_Q$.
 Finally, we observe that this is iteratively done for all vertices $v\in
 W$, that all edges in $E(\la W\ra)$ are contained in some $E_v$ of $S_v$
 and, by Lemma \ref{lem:vMeetsEveryClass}, that every equivalence class of
 $\R_{|S_v}(W)^*$ is met by every vertex $v\in W$. Therefore, we can
 conclude that each edge is uniquely assigned to some class $\varphi_Q
 \eqcl \R_{|S_v}(W)^*$. Hence, the algorithm is correct.

 In order to determine the time complexity we first consider line
 \ref{alg:LocalRW:computeFirstStuff}. The induced $2$-neighborhood can be
 computed in $\Delta^2$ time and has at most $\Delta^2$ vertices, and hence
 at most $\Delta^4$ edges. As shown by Chiba and Nishizeki \cite{CN85} all
 triangles and all squares in a given graph $G=(V,E)$ can be computed in
 $O(|E|\Delta )$ time. Combining these results, we can conclude that all
 chordless squares can be listed in $O(|E|\Delta )$ time. Thus, in this
 preprocessing step, we are able to determine $\delta', S_v$ and $\R_{|S_v}$
 in $O(\Delta^4)$ time. Since this is done for all vertices $v\in W$, we end
 in an overall time complexity $O(|E|\Delta + |W|\Delta^4)$ for the
 preprocessing step and the while-loop. For the second part, we observe that
 $\Gamma$ has at most $O(|E|)$ connected components. Since the number
 of edges is bounded by $|V|\Delta$ we conclude that Algorithm
 \ref{alg:LocalRW} has time complexity $O(|V|\Delta^2 + |W|\Delta^4) =
 O(|V|\Delta^4)$.
\end{proof}

\begin{algorithm}[tbp]
\caption{\texttt{Local $\R_{|S_v}(W)^*$ computation}}
\label{alg:LocalRW}
\begin{algorithmic}[1]
\renewcommand{\baselinestretch}{1.1}
\vspace{1mm}
    \STATE \textbf{INPUT:} A graph $G=(V,E)$, $W \subseteq V$.
	\STATE $\sigma \gets W$
    \STATE initialize graph $\Gamma = \emptyset$; \COMMENT{called ``color graph''}
    \WHILE{$\sigma \neq \emptyset $}
    \STATE take any vertex $v$ of $\sigma$;
         \STATE compute $\langle N_2^G[v] \rangle$, $\delta'=\delta(\langle N_2^G[v] \rangle)$,
				 $S_v$ and $\R^*_{|S_v}$ w.r.t. $\delta'$; \label{alg:LocalRW:computeFirstStuff}
		 \STATE color the edges of $S_v$  w.r.t. the equivalence classes of $\R_{|S_v}$;
         \STATE set $num\_class$ = the number of equivalence classes of $\R_{|S_v}$;
         \STATE add $num\_class$ new vertices to $\Gamma$;
         \FOR {every edge $e$ in $S_v$}
            \IF{$e$ was already colored in $G$}
								\STATE x = old color of $e$; y = new color of $e$;
								\STATE add vertices $(x,e)$ and $(y,e)$ to $\Gamma$
                \STATE join all vertices of the from $(x,f)$ and $(y,f')$ in $\Gamma$;
             \ENDIF
         \ENDFOR
    \STATE delete $v$ from $\sigma$;
    \ENDWHILE
	\STATE	\COMMENT{compute the equivalence class $\varphi_k \eqcl \R_{|S_v}(W)^*$.}
    \STATE set $num\_comp$ = number of connected components of $\Gamma$;
    \FOR{$k=1$ to $num\_comp$}
			\IF{color of $e$ is vertex in component $k$ of $\Gamma$}
					\STATE $\varphi_k \gets e$;			
			\ENDIF
      \ENDFOR
 			\STATE \textbf{OUTPUT:} $\R_{|S_v}(W)^*$ and $\cup_{v\in W} S_v$;
\renewcommand{\baselinestretch}{1.}
\small\normalsize
\end{algorithmic}
\end{algorithm}

By means of Theorem \ref{thm:union_equals_delta} and Lemma
\ref{lem:algoLocalRW} we can directly infer the next corollary.

\begin{corollary}
 Let $G=(V,E)$ be a given graph with maximum degree $\Delta$.
 Then $\delta(G)^*$ can be computed in $O(|V|\Delta^4)$ time by a call
 of Algorithm \ref{alg:LocalRW} with input $G$ and $W=V$.
\label{cor:algoLocalRW}
\end{corollary}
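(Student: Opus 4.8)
The plan is to derive this as an immediate consequence of the two results it cites, so the ``proof'' is really just bookkeeping to check that their hypotheses are met. First I would invoke the standing assumption from Section~2 that $G$ is finite, simple, connected and undirected; in particular $\la V\ra = G$ is connected, so the choice $W = V$ is legitimate as input to Algorithm~\ref{alg:LocalRW} and satisfies the connectivity hypothesis of Lemma~\ref{lem:algoLocalRW}.

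Next, applying Lemma~\ref{lem:algoLocalRW} with $W = V$, Algorithm~\ref{alg:LocalRW} correctly computes $\R_{|S_v}(V)^*$ (together with $\cup_{v\in V}S_v$), and it does so in $O(|V|\Delta^4)$ time. Then I would appeal to Theorem~\ref{thm:union_equals_delta}, which states precisely that $\R_{|S_v}(V)^* = \delta(G)^*$. Substituting this identity into the output of the algorithm yields that Algorithm~\ref{alg:LocalRW} computes $\delta(G)^*$ within the claimed $O(|V|\Delta^4)$ time bound, which is exactly the assertion of the corollary.

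There is no real obstacle here; the only point requiring a moment's attention is confirming that $W = V$ induces a connected subgraph, which is immediate from the blanket connectivity assumption on $G$, and that the running time bound in Lemma~\ref{lem:algoLocalRW} is already stated in terms of $|V|$ (not $|W|$), so no re-estimation is needed when specializing to $W = V$.
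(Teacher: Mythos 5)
Your argument is exactly the paper's: the corollary is stated immediately after the remark that it follows ``by means of Theorem~\ref{thm:union_equals_delta} and Lemma~\ref{lem:algoLocalRW},'' and you combine precisely these two results, with the added (correct) check that the blanket connectivity assumption on $G$ makes $W=V$ an admissible input. The proposal is correct and matches the paper's intended proof.
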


As mentioned before, a vertex $x$ of a Cartesian product $\Box_{i=1}^n G_i$ is
properly ``coordinatized'' by the vector $c(x) := (c_1(x),\dots,c_n(x))$,
whose entries are the vertices $c_i(x)$ of its factor graphs $G_i$. Two
adjacent vertices in a Cartesian product graph differ in exactly one
coordinate. Furthermore, the coordinatization of a product is equivalent to
an edge coloring of $G$ in that edges $(x,y)$ share the same color $c_k$
if $x$ and $y$ differ in the coordinate $k$. This colors the edges of $G$
(with respect to the given product representation).

Conversely, the idea of Algorithm \ref{alg:findH} is to compute vertex
coordinates of a subgraph of $\cup_{v\in W} S_v$ based on its
$\R_{|S_v}(W)^*$-coloring. In particular, we want to compute coordinates
that reflect parts of the $\R_{|S_v}(W)^*$-coloring of $\cup_{v\in W} S_v$
in a consistent way. Consistent means that all adjacent vertices $u$ and
$v$ with $(u,v)\in \varphi_r \eqcl \R_{|S_v}(W)^*$ differ exactly in their
$r$-th position of their coordinate vectors, and no two distinct vertices obtain
the same coordinate. This goal cannot always be achieved for all vertices
contained in $\cup_{v\in W} S_v$. In \cite[p. 280 et seqq.]{Hammack:2011a}
a way is shown how to avoid those inconsistencies. In this approach colors
of edges with ``inconsistent'' vertices are merged to one color. However,
if the graph under investigation is only slightly perturbed, but prime, this
approach would merge all colors to one. This is what we want to avoid.
Instead of merging colors and hence, in order to preserve a possibly
underlying product structure, we remove those vertices in $\cup_{v\in W}
S_v$ where consistency fails. This leads to a subgraph $H\subseteq
\cup_{v\in W} S_v$ where the edges are still $\R_{|S_v}(W)^*$-colored w.r.t.
$\cup_{v\in W} S_v$ and have the desired coordinates. In Algorithm
\ref{alg:Embedding} we finally compute $H_i$ based on these coordinates and
the edges of $\varphi_i \eqcl (\R_{|S_v}(W)^*)_{|H}$, $1\leq i\leq k$.
Hence, the connected component of $H$ induced by the edges of $\varphi_i \eqcl
\R_{|S_v}(W)^*$ are subgraphs of layers $H_i$ of the Cartesian product
$\Box_{i=1}^{k} H_i$ and therefore, $H$ can be embedded into
$\Box_{i=1}^{k} H_i$.

\begin{algorithm}[htbp]
\caption{\texttt{Compute vertex coordinates of $H\subseteq\cup_{v\in W} S_v \subseteq G$}}
\label{alg:findH}
\begin{algorithmic}[1]
\renewcommand{\baselinestretch}{1.1}
\vspace{1mm}
    \STATE \textbf{INPUT:} A graph $G=(V,E)$, $W \subseteq V$; 	
    \STATE compute $\R_{|S_v}(W)^*$ and $\cup_{v\in W} S_v$ with \texttt{Local $\R_{|S_v}(W)^*$ computation} and input $G, W$; \label{alg:findH:initStart}
    \STATE $H\gets \cup_{v\in W} S_v$; \COMMENT{Note $W\subseteq V(H)$};
    \STATE $GoOn\gets true$ 	
 	\WHILE{$GoOn$} \label{alg:findH:whileStart}
	  \STATE $num\_class \gets$  number of equivalence classes of $\R_{|S_v}(W)^*$;
      \STATE $Q_i \gets$ subgraph of $H$ induced by edges of $\varphi_i \eqcl \R_{|S_v}(W)^*$ for all $i=1$ to $num\_class$;
      \STATE $Q_i(x) \gets$ connected component of $Q_i$ containing vertex $x$ for each $x\in V(H)$ for all $i=1$ to $num\_class$; 	
      \IF{exist $i$ and $j$ with $|V(Q_i(x))\cap V(Q_j(x))|>1$ for some $x\in V(H)$}
		  \STATE combine $\varphi_i$ and $\varphi_j$, i.e., compute $\varphi_i \cup \varphi_j$ in $\R_{|S_v}(W)^*$; \label{merge}
	  \ELSE \STATE $GoOn \gets false$;		
     \ENDIF	
	\ENDWHILE	\label{alg:findH:whileEnd}
    \STATE $v_0\gets$ arbitrary vertex of $W$;    \label{alg:findH:coordinatesSTART}
    \STATE label each vertex $x$ in each $Q_i(v_0)$ uniquely with $l_i(x)\in \{1, \dots ,|Q_i(v_0)|\}$; \label{alg:findH:initEnd}
    \STATE set coordinates $c_r(v_0)=0$ for all $r=1, \dots, num\_class$ \label{alg:findH:Coord1Start}
	\FOR{every vertex $x \in Q_i(v_0)$ and for all $i=1$ to $num\_class$}
		\STATE set coordinates $c_r(x)=0$ for all $r=1, \dots, num\_class$ and $r\neq i$;
		\STATE set coordinates $c_i(x)=l_i(x)$;
	\ENDFOR	\label{alg:findH:Coord1End}
  	\STATE $d_{\max}\gets \max_{x \in V(H)} d_H(v_0,x)$; \label{alg:findH:DistStart}
    \STATE $L_i \gets \{x \in V(H) \mid d_H(v_0,x)=i\}$ for $i=1,\dots d_{\max}$ \label{alg:findH:DistEnd}
	\FOR{$i=2$ to $L_{\max}$} \label{alg:findH:Scan}
	  \FOR{all $x\in L_i$ that have not obtained coordinates yet} \label{alg:findH:Scan2}
		\IF{for all $u\in N^H(x)$ that already obtained coordinates holds $(x,u)\in \varphi_r$ for some fixed $r$} \label{CoordCase1}
			\STATE	set coordinate $c_r(x) = l_r(x)$ \COMMENT{$l_r(x)$ is unique unused label};
			\STATE set coordinates $c_i(x)=c_i(u)$ for all $i=1, \dots, num\_class$, $i\neq r$;
    	\ELSIF{for all $u\in N^H(x)$ holds $u$ has not obtained coordinates} \label{CoordCase2}
			  \STATE remove $x$ and all edges adjacent to $x$ from $H$;
		      \STATE remove $x$ from $L_i$;	
		\ELSE   \label{CoordCase3}
         \STATE \COMMENT{now there are distinct neighbors $u,w\in N^H(x)$ and thus, have not been removed from $H$, such that
									they already obtained coordinates
        				  with $((x,u),(x,w)) \not\in \R_{|S_v}(W)^*$, i.e., $(x,u)\in \varphi_r$, $(x,w) \in \varphi_s$, $r\neq s$}
		   \STATE set coordinate $c_r(x)=c_r(w)$; set coordinate $c_s(x)=c_s(u)$;
     	   \STATE set coordinates $c_i(x)=c_i(u)$ for all $i=1$ to $ num\_class$, $i\neq r,s$;
		\ENDIF							
			\STATE call \texttt{ConsistencyCheck} for $x$ and vertices that already obtained coordinates;\label{consCheck}
	   \ENDFOR
	 \ENDFOR   \label{alg:findH:ScanEnd}
	\STATE \COMMENT{$H$ has been modified via deleting vertices $x$ that fail the consistency checks.}
    \STATE \textbf{OUTPUT:} $H$ with coordinatized vertices;
\renewcommand{\baselinestretch}{1.}
\small\normalsize
\end{algorithmic}
\end{algorithm}

\begin{lemma}
 Given a graph $G=(V,E)$ with maximum degree $\Delta$ and
  $W\subseteq V$ such that $\la W \ra$ is
 connected, then Algorithm \ref{alg:findH} computes the coordinates of a
 subgraph $H\subseteq G$ with $H\subseteq \cup_{v\in W} S_v$ such that
 \begin{enumerate}
	\item no two vertices of $H$ are assigned identical coordinates and
	\item adjacent vertices $x$ and $y$ with $(x,y)\in \varphi_r\eqcl \R_{|S_v}(W)^*$ differ
           exactly in the $r$-th coordinate.
 \end{enumerate}	
 The time complexity of Algorithm \ref{alg:findH} is $O(|V|\Delta^4 + |V|^2\Delta^2)$. 
\label{lem:alg:findH}
\end{lemma}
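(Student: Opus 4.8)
The proof is naturally split into a correctness part and a complexity part. For \textbf{correctness}, the plan is to analyze the algorithm in the three phases it actually consists of: (a) the merging loop (lines~\ref{alg:findH:whileStart}--\ref{alg:findH:whileEnd}), (b) the initial coordinatization of the layers through $v_0$ (lines~\ref{alg:findH:coordinatesSTART}--\ref{alg:findH:Coord1End}), and (c) the breadth-first scan (lines~\ref{alg:findH:Scan}--\ref{alg:findH:ScanEnd}) together with the consistency checks. First I would record the key structural fact supplied by Lemma~\ref{lem:vMeetsEveryClass}: in $\cup_{v\in W}S_v$ every vertex of $W$ meets every class of $\R_{|S_v}(W)^*$, and there are at most $\Delta$ classes; this guarantees the loop in lines~\ref{alg:findH:Scan2}ff.\ always has a neighbor with coordinates to work from once the BFS has started from $v_0$. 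The central invariant to maintain through the BFS is exactly the conjunction of (1) and (2) in the statement, restricted to the vertices that have so far received coordinates and survived in $H$; I would prove by induction on the BFS level $i$ that after processing $L_i$ this invariant holds. The three branches in lines~\ref{CoordCase1}, \ref{CoordCase2}, \ref{CoordCase3} correspond to: all colored neighbors agree on a single color $r$ (so $x$ extends that layer and keeps all other coordinates fixed — invariant (2) is immediate, and (1) follows from choosing $l_r(x)$ an unused label in $Q_r$), no colored neighbor (so $x$ is provisionally unreachable and is deleted), and two colored neighbors in genuinely different colors $r\neq s$ (so $x$ must be the unique top vertex of a square, by the square structure of PSP's in Corollary~\ref{cor:PSP1} and Lemma~\ref{lem:PSP1}, which forces the coordinate assignment $c_r(x)=c_r(w)$, $c_s(x)=c_s(u)$, and all other coordinates equal to those of $u$). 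The \texttt{ConsistencyCheck} call then removes $x$ precisely when this forced assignment would either collide with an existing coordinate (violating (1)) or be incompatible with a third colored neighbor (violating (2)); so after the call the invariant is restored on the surviving subgraph. Finally, since edges of $\la W\ra$ all lie in some $E_v\subseteq S_v$ and $\la W\ra$ is connected, BFS from $v_0$ reaches a spanning connected subgraph before deletions, and what remains is a subgraph $H\subseteq\cup_{v\in W}S_v$ satisfying (1) and (2), as claimed.

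For the \textbf{complexity}, the plan is to charge the cost to the two dominating operations. Line~\ref{alg:findH:initStart} invokes Algorithm~\ref{alg:LocalRW}, which by Lemma~\ref{lem:algoLocalRW} costs $O(|V|\Delta^4)$, and the resulting graph $\cup_{v\in W}S_v$ has $O(|V|\Delta^2)$ vertices (each $\langle N_2^G[v]\rangle$ contributes $O(\Delta^2)$) and hence $O(|V|\Delta^2)$ edges, with at most $\Delta$ color classes. The merging loop (lines~\ref{alg:findH:whileStart}ff.)\ can merge classes at most $\Delta-1$ times, and each iteration recomputes the components $Q_i(x)$ in time linear in the size of $H$, i.e.\ $O(|V|\Delta^2)$, for a total of $O(|V|\Delta^3)$, which is absorbed. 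The BFS (lines~\ref{alg:findH:Scan}ff.)\ processes each of the $O(|V|\Delta^2)$ vertices once; for each vertex $x$ the work other than the consistency check is $O(\deg x)=O(\Delta)$, giving $O(|V|\Delta^3)$. The consistency check for $x$ compares the forced coordinate of $x$ against all already-coordinatized vertices (or at least all those in the relevant layers), which in the worst case is $O(|V(H)|)=O(|V|\Delta^2)$ per vertex, hence $O(|V|^2\Delta^4)$ naively; to land on the claimed $O(|V|\Delta^4+|V|^2\Delta^2)$ I would argue that the collision test for (1) only needs to scan, for the two active colors $r,s$, the labels already used in $Q_r(v_0)$ and $Q_s(v_0)$ — each of size $O(|V|\Delta^0)=O(|V|)$ in the relevant coordinate — so the per-vertex cost is $O(|V|)$ and the total over all $O(|V|\Delta^2)$ vertices is $O(|V|^2\Delta^2)$, while the compatibility test for (2) against the at most $\Delta$ neighbors is $O(\Delta)$ per vertex, i.e.\ $O(|V|\Delta^3)$ overall. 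Summing the preprocessing $O(|V|\Delta^4)$ with $O(|V|^2\Delta^2)$ yields the stated bound.

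\textbf{Anticipated main obstacle.} The correctness argument in branch~\ref{CoordCase3} is the delicate point: I must show that when $x$ has two colored neighbors $u,w$ with $(x,u)\in\varphi_r$, $(x,w)\in\varphi_s$, $r\neq s$, the pair $u,w$ together with $x$ and a common neighbor actually forms a (unique, chordless) square whose coordinates are already consistent in the sense required — otherwise the assignment $c_r(x)=c_r(w)$, $c_s(x)=c_s(u)$ is ill-defined or contradicts invariant (2) at $u$ or $w$. This is exactly where the PSP machinery (Lemma~\ref{lem:PSP1}, Corollary~\ref{cor:PSP1}, and the local-to-global agreement from Lemma~\ref{lem:vMeetsEveryClass} and Theorem~\ref{thm:isomSubgraph}) has to be brought to bear, and where \texttt{ConsistencyCheck} must be shown to delete $x$ in precisely the cases the structural lemmas do not cover. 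The complexity side is comparatively routine once one fixes the right data structure for the layer-label collision test; the only care needed is to avoid a spurious $\Delta^2$ factor there, which is why the bound is stated as $O(|V|\Delta^4+|V|^2\Delta^2)$ rather than a single monomial.
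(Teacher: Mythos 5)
Your plan follows essentially the same route as the paper: a phase-by-phase walk through the merging loop, the initial coordinatization of the $Q_i(v_0)$, and the BFS scan with its three branches, with properties (1) and (2) ultimately enforced by the deletions in \texttt{ConsistencyCheck}, and the complexity charged to the $O(|V|\Delta^4)$ preprocessing plus the $O(|V|^2\Delta^2)$ scan. Two remarks. First, on correctness you flag branch three as the delicate point where one must exhibit an actual chordless square forcing the assignment $c_r(x)=c_r(w)$, $c_s(x)=c_s(u)$; the paper does not attempt this and does not need to for the lemma as stated, since any vertex or edge for which the heuristic assignment violates (1) or (2) is simply deleted by the consistency check --- the claims hold on the surviving subgraph by construction, so your invariant argument goes through without invoking Lemma~\ref{lem:PSP1} or Theorem~\ref{thm:isomSubgraph} at that point. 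Second, your complexity accounting contains a slip: $\cup_{v\in W}S_v$ is a subgraph of $G$, so it has at most $|V|$ vertices, not $O(|V|\Delta^2)$ (you are summing the sizes of the overlapping neighborhoods rather than of their union). This overcount is what forces your refinement of the collision test to scanning only the labels in $Q_r(v_0)$ and $Q_s(v_0)$, and that refinement is not sound for property (1): agreement in two coordinates does not certify agreement of the full coordinate vectors. With the correct bound $|V(H)|\le|V|$, the naive check of $x$ against all previously coordinatized vertices costs $O(|V|\Delta)$ per vertex and $O(|V|^2\Delta^2)$ in total, which is exactly the paper's accounting and lands on the stated bound without any refinement.
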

\begin{proof}
	The init steps (Line \ref{alg:findH:initStart} - \ref{alg:findH:initEnd})
	include the computation of $\R_{|S_v}(W)^*$, $H=\cup_{v\in W} S_v$, and
  the connected components $Q_i(x)$ that contain vertex $x$ and which are
	induced	by edges of $\varphi_i \eqcl \R_{|S_v}(W)^*$.
	By merging equivalence classes (Line \ref{merge}) we ensure that
	after the first while-loop connected components induced by
	$\R_{|S_v}(W)^*$ equivalence classes intersect in at most one vertex.
	Hence, vertices $x$ in $Q_i(v_0)$ can be assigned a unique label $l_i(x)$
	for each $i=1, \dots , num\_class$. In Line
	\ref{alg:findH:Coord1Start}-\ref{alg:findH:Coord1End} we assign coordinates
	to each vertex contained in $Q_i(v_0)$ for each $i=1, \dots, num\_class$.
	Since any two distinct subgraphs $Q_i(v_0)$ and $Q_j(v_0)$ intersect only
	in vertex $v_0$ we can ensure that adjacent vertices in each subgraph
	$Q_i(v_0)$ differ exactly in the $i$-th position of their coordinate. We
	finally compute the distances from $v_0$ to all other vertices in $H$, and
	distance levels $L_i$ containing all vertices $x$ with $d_H(v_0,x)=i$
	(Line \ref{alg:findH:DistStart} and \ref{alg:findH:DistEnd}). Notice,
	the preceding procedure assigns coordinates to all vertices of distance
	level $L_1$.

	In Line \ref{alg:findH:Scan} we scan all vertices in breadth-first search
	order w.r.t. to the root $v_0$, beginning with vertices in $L_2$, and
	assign coordinates to them.
 This is iteratively done for all vertices in level $L_i$ which either
 obtain coordinates based on the coordinates of adjacent vertices
 or are removed from graph $H$ and level $L_i$. In particular, in the subroutine
	\texttt{ConsistencyCheck} (Algorithm	\ref{alg:ConsistencyCheck}) we might
	also delete vertices and therefore we have to consider three cases.	\\
	\emph{First Case (Line \ref{CoordCase1}):} We assume that \emph{all}
	neighbors of a chosen vertex $x\in L_i$ that already obtained coordinates
	are contained in the \emph{same} subgraph $Q_r(x)$. Hence, the
	coordinates of $x$ should differ from their neighbor's coordinates in the
	$r$-th position. This is achieved by setting $c_r(x)$ to the unique label
	$l_r(x)$ and the rest of its coordinates identical to its neighbors.\\
	\emph{Second Case (Line \ref{CoordCase2})}:
  It might happen that vertex $x$ does not have any neighbor
	with assigned coordinates, that is, either those neighbors of $x$ are
  removed from $H$ and	$L_j$, $j\leq i$ in some previous step,
	or they have not obtained coordinates so far.
  If this case
	occurs, then we also remove vertex $x$ from $H$ and $L_i$, since no information
	to coordinatize vertex $x$ can be inferred from its neighbors.\\
	\emph{Third Case (Line \ref{CoordCase3}):}
	Let
	$u,w\in N^H(x)$ be neighbors of $x$ such that $u$ and $w$
	have already assigned coordinates and
	the edges $(x,u)$ and $(x,w)$ are in
	different equivalence classes. Assume $(x,u)\in \varphi_r$ and $(x,w) \in
	\varphi_s$, $r\neq s$.
	Keep in mind that $x$ should then differ from
 	$u$ and $w$ in the $r$-th and in the $s$-th position of its coordinates, respectively.
		Thus, we set coordinate $c_r(x)=c_r(w)$ and
	$c_s(x)=c_s(u)$. The remaining coordinates of $x$ are chosen to be
	identical to the coordinates of $u$. Note, we
	basically follow in this case the strategy to coordinatize vertices as
	proposed in \cite{AuHaIm92}.
     	
	In order to ensure that no two vertices obtained the same coordinates or
	that two adjacent vertices differ in exactly one coordinate we provide a
	consistency check in Line \ref{consCheck} and Algorithm
	\ref{alg:ConsistencyCheck}. If $x$ has the same coordinate as some
	previous coordinatized vertex we remove $x$ from $H$ and $L_i$. If $x$
	has a neighbor $y$ with coordinates that differ in more than one position
	from the coordinates of $x$ we delete the edge $(x,y)$ from $H$.

  To summarize, we end up with a subgraph $H\subseteq \cup_{v\in W} S_v$,
	such that the vertices of $H$ are uniquely coordinatized and such that
	adjacent edges $(x,y)\in\varphi_r\eqcl \R_{|S_v}(W)^*$ differ exactly in
	the $r$-th position of their coordinates.

	We complete the proof by determining the time complexity of Algorithm
	\ref{alg:findH}. Lemma \ref{lem:algoLocalRW} implies that Algorithm
	\ref{alg:LocalRW} determines $\R_{|S_v}(W)^*$ and $\cup_{v\in W} S_v$ in
	$O(|V|\Delta^4)$ time. Since $\la W\ra$ is connected, Lemma
	\ref{lem:vMeetsEveryClass} implies that $\R_{|S_v}(W)^*$ has at most
	$\Delta$ equivalence classes and therefore, the while-loop (Line
	\ref{alg:findH:whileStart} - \ref{alg:findH:whileEnd}) runs at most
	$\Delta$ times. The computation of the graphs $Q_i$ and $Q_i(x)$ within
	this while-loop can be done via a breadth-first search in $O(|E|+|V|) =
	O(|V|\Delta)$ time, since there are at most $|V|\Delta$ edges and
	connected components.
	The
	intersection and the union of $Q_i$ and $Q_j$ can be computed in
	$O(|V|^2)$. Hence, the overall-time complexity of the while-loop is
	$O(\Delta |V|^2)$.
	The assignments of coordinates to vertices $x\in
	Q_i(v_0)$ can be done in $O(\Delta)$ time. Since there are at most $|V|$
	vertices and at most $\Delta$ equivalence classes we end in
	$O(|V|\Delta^2)$ time. Computing distances from $v_0$ to all other
	vertices and the computation of $L_i$ can be achieved via breadth-first
	search in $O(|E|+|V|) = O(|V|\Delta)$ time. Consider now the two for-loops in
	Line \ref{alg:findH:Scan} and \ref{alg:findH:Scan2}.
	Each vertex is traversed exactly once. Hence these for-loops
	run $O(|V|)$ times.
	For each vertex in each distance levels we check whether
	there are neighbors in level $L_{i-1}$, which are at most $\Delta$ for
	each vertex $x$, and compute the $\Delta$ positions of the coordinates
	for each such vertex. The consistency check (Algorithm
	\ref{alg:ConsistencyCheck}) runs in $O(|V|(\Delta+\Delta)) =
	O(|V|\Delta)$ time. Hence, the overall time complexity of the for-loop
	(Line \ref{alg:findH:Scan} - Line \ref{alg:findH:ScanEnd})
	is $O(|V|^2\Delta^2)$.

	Combining these results, one can conclude that the time complexity
	of Algorithm \ref{alg:findH} is $O(|V|\Delta^4 + |V|^2\Delta^2)$. 
\end{proof}

\begin{algorithm}[tbp]
\caption{\texttt{ConsistencyCheck}}
\label{alg:ConsistencyCheck}
\begin{algorithmic}[1]
\renewcommand{\baselinestretch}{1.1}
\vspace{1mm}
    \STATE \textbf{REQUIRE:} Call \texttt{ConsistencyCheck} for vertex $x$ from Algorithm \ref{alg:findH}; 	
    \STATE \textbf{ENSURE:} no two vertices obtain identical coordinates and adjacent vertices differ in exactly one coordinate;
		    \FOR{all $y\in V(H)$, $x\neq y$ that already obtained coordinates} \label{consCheckStart}
			  \STATE \COMMENT{consistency check that no two vertices obtain the same coordinates}
              \IF{$c_r(x)=c_r(y)$ for all $r=1$ to $num\_class$}
                   \STATE remove $x$ and all edges adjacent to $x$ from $H$;
				   \STATE remove $x$ from $L_i$;
				   \STATE break for loop;
              \ELSE
				\STATE \COMMENT{consistency check that two adjacent vertices differ only in one $r$-th coordinate}
				\IF{$(x,y)$ is edge contained in some $\varphi_r$ and 										
				    	$c_r(x)=c_r(y)$ or $c_i(x)\neq c_i(y)$ for some $i=1$ to $num\_class$, $i\neq r$ } \label{alg:ConsistencyCheck:deleteEdge}
                        \STATE remove edge $(x,y)$ from $H$;
	    				\STATE break for loop;
				   \ENDIF
			    \ENDIF
			 \ENDFOR
\renewcommand{\baselinestretch}{1.}
\small\normalsize
\end{algorithmic}
\end{algorithm}

\begin{algorithm}[tbp]
\caption{\texttt{Embedding of $H$ into Cartesian product}}
\label{alg:Embedding}
\begin{algorithmic}[1]
\renewcommand{\baselinestretch}{1.1}
\vspace{1mm}
    \STATE \textbf{INPUT:} A graph $G=(V,E)$ with coordinatized vertices; 	
    \FOR{each position $i=1$ to $r$ of coordinates}
		\STATE initialize graph $H_i=\emptyset$;
		\FOR{each vertex $v\in V$}
			\IF{$c_i(v)\notin V(H_i)$}
				\STATE add $c_i(v)$ to $V(H_i)$;
			\ENDIF
		\ENDFOR
	\ENDFOR
	\FOR{each position $i=1$ to $r$ of coordinates}
	  \FOR{each edge $(x,y)\in E$}
			\IF{$c_i(x)\neq c_i(y)$ and edge $(c_i(x),c_i(y))\notin E(H_i)$}
				\STATE add $(c_i(x),c_i(y))$ to $E(H_i)$;
			\ENDIF
	  \ENDFOR
	\ENDFOR		
    \STATE \textbf{OUTPUT:} Factors $H_i$ and Cartesian product $\Box_{i=1}^r H_i$ where $G$ can be embedded into;
\renewcommand{\baselinestretch}{1.}
\small\normalsize
\end{algorithmic}
\end{algorithm}

\begin{figure}[tbp]
  \centering
  \subfigure[A Cartesian prime graph $G=(V,E)$ is shown. For all vertices
             $x\in V$ (marked with "X") the respective $\R_{|S_x}$ has only
             one equivalence class. Thus, we use only all non-"X"-marked
             vertices, pooled in the set $W\subseteq V$ and call
             \texttt{Local $\R_{|S_v}(W)^*$ computation} (Alg. \ref{alg:LocalRW}). The equivalence
             classes  of $\R_{|S_v}$ for vertex $v=v_0$ are highlighted by dashed and thick
             edges. ]{
    \label{fig:Labelname1}
    \includegraphics[bb=45 367 360 640, width=0.45\textwidth]{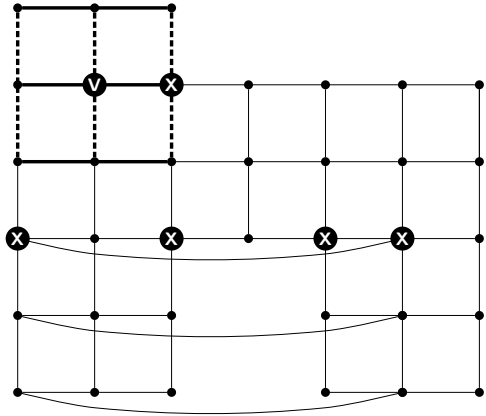}
  } $\qquad$
  \subfigure[After calling \texttt{Local $\R_{|S_v}(W)^*$ computation} (Alg. \ref{alg:LocalRW}) we
             obtain the equivalence classes of $\R_{|S_v}(W)^*$ highlighted
             by dashed and thick edges. After calling \texttt{Compute
             vertex coordinates} (Alg. \ref{alg:findH}, Line \ref{alg:findH:coordinatesSTART} -
             \ref{alg:findH:Coord1End}) we obtain a graph where the
             vertices in each $G_i^v$-layer obtain unique coordinates. ]{
    \label{fig:Labelname2}
\includegraphics[bb=45 367 360 640, width=0.45\textwidth]{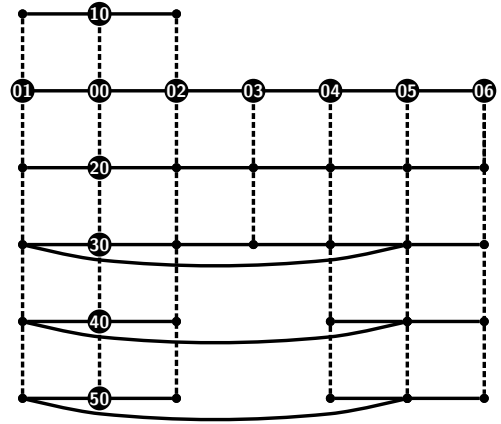}
  }
  \centering
  \subfigure[Shown is the graph $G$ with coordinatized vertices for all
             $x\in \cup_{i=1}^4 L_i$. Note, the vertex $x$ with coordinates
             $(37)$ obtained a new unused second coordinate $7$, since all
             edges $(u,x)$ where $u$ already obtained coordinates are from the same
             equivalence class (Alg. \ref{alg:findH}, Line \ref{CoordCase1}).
							Thus, coordinates cannot be combined.]{
    \label{fig:Labelname3}
    \includegraphics[bb=45 367 360 660, width=0.45\textwidth]{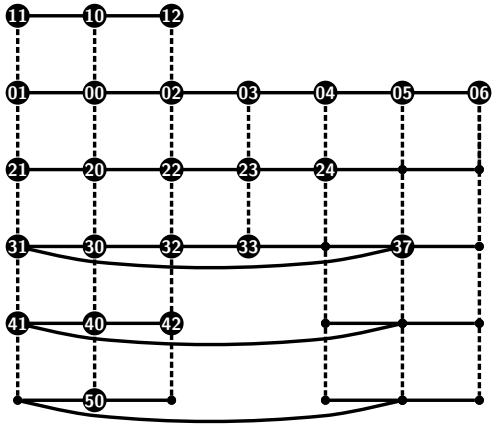}
  } $\qquad$
  \subfigure[Shown is the graph $G$ with coordinatized vertices for all
             $x\in \cup_{i=1}^5 L_i$. Note, after running
             \texttt{ConsistencyCheck} (Alg. \ref{alg:ConsistencyCheck},
						Line \ref{alg:ConsistencyCheck:deleteEdge})
             the edge between the vertices with
             coordinates $(37)$ and $(25)$ is deleted, since the vertices
             differ in more than one coordinate.]{
    \label{fig:Labelname4}
\includegraphics[bb=45 367 360 660, width=0.45\textwidth]{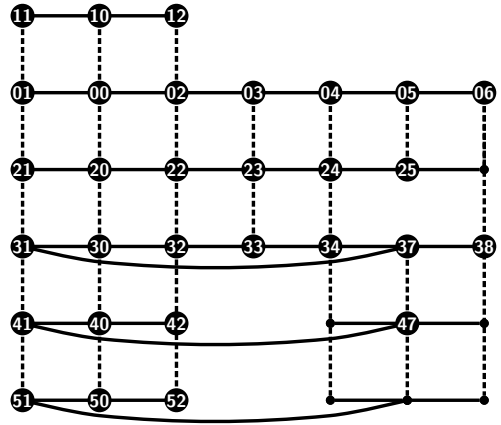}
  }
  \caption{The basic steps of Algorithm \ref{alg:LocalRW} and \ref{alg:findH}}
  \label{fig:ExmplAlgo}
\end{figure}

\begin{figure}[tbp]
	\centering
  \includegraphics[bb= 43 289 440 644, scale=0.5]{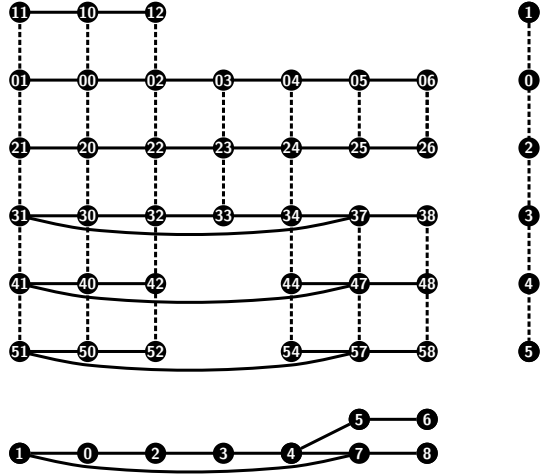}
 \caption{After running Algorithm \ref{alg:LocalRW} and \ref{alg:findH} we
           obtain $H$ as a subgraph of the graph $G$ in Figure
           \ref{fig:ExmplAlgo}, with coordinatized vertices, and edges colored w.r.t. $\R_{|S_v}(W)^*$
           equivalence classes. After running
           \texttt{Embedding of $H$ into Cartesian product} (Alg.
           \ref{alg:Embedding}) we obtain the putative factors $H_1$ and
           $H_2$ of $H$ and, hence, of $G$. Note, due to the
           coordinatization of $H$ the embedding of $H$ into $H_1 \Box H_2$
           can easily be determined. }
  \label{fig:ExmplAlgo2}
\end{figure}

\begin{lemma}
 	Given a graph $G=(V,E)$ with maximum degree $\Delta$ obtained from
	Algorithm \ref{alg:findH} with coordinatized vertices. Then Algorithm
	\ref{alg:Embedding} computes factors $H_i$ such that $G$ can be embedded
	into $\Box_{i=1}^r H_i$ in $O(|E|\Delta)$ time.
\label{lem:alg:Embedding}
\end{lemma}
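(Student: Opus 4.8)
\emph{Overview of the plan.} The statement asks for two things: that $G$ (i.e.\ the graph $H$ returned by Algorithm~\ref{alg:findH}, already equipped with a coordinatization) can be embedded into $\Box_{i=1}^r H_i$, and that Algorithm~\ref{alg:Embedding} produces the $H_i$ in $O(|E|\Delta)$ time. I would prove the first part by writing the embedding down explicitly and checking it is an injective homomorphism using the two guarantees of Lemma~\ref{lem:alg:findH}, and the second part by a routine per-phase cost count.

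\emph{The map and its well-definedness.} Every vertex $x$ of $G$ carries a coordinate vector $c(x)=(c_1(x),\dots,c_r(x))$, and $r\le\Delta$: by Lemma~\ref{lem:vMeetsEveryClass} there are at most $\Delta$ classes of $\R_{|S_v}(W)^*$, and Algorithm~\ref{alg:findH} only ever \emph{merges} classes, never splits them. Take $\gamma\colon V(G)\to V(\Box_{i=1}^r H_i)$ with $\gamma(x)=c(x)$. The first two for-loops of Algorithm~\ref{alg:Embedding} insert every value $c_i(v)$, $v\in V$, into $V(H_i)$, so $c_i(x)\in V(H_i)$ for each $i$ and hence $\gamma(x)\in\prod_{i=1}^rV(H_i)=V(\Box_{i=1}^rH_i)$; thus $\gamma$ is well defined. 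Injectivity of $\gamma$ is precisely assertion~(1) of Lemma~\ref{lem:alg:findH}. For the homomorphism property, let $(x,y)\in E(G)$; since Algorithm~\ref{alg:findH} only deletes vertices and edges and never re-colours or creates edges, $(x,y)$ still lies in some class $\varphi_r\eqcl\R_{|S_v}(W)^*$, and assertion~(2) of Lemma~\ref{lem:alg:findH} yields $c_r(x)\neq c_r(y)$ while $c_i(x)=c_i(y)$ for all $i\neq r$. The second pair of for-loops of Algorithm~\ref{alg:Embedding} then inserts $(c_r(x),c_r(y))$ into $E(H_r)$, so $\gamma(x)$ and $\gamma(y)$ agree in every coordinate except the $r$-th, where they are joined by an edge of $H_r$; by the definition of the Cartesian product this is exactly the condition for $(\gamma(x),\gamma(y))\in E(\Box_{i=1}^rH_i)$. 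Hence $\gamma$ is an embedding of $G$ into $\Box_{i=1}^rH_i$.

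\emph{Running time, and the one delicate point.} With $r\le\Delta$ in hand, the first double loop visits each of the $r$ coordinate positions once per vertex, and with a boolean marker array per factor indexed by the at most $|V|$ possible coordinate values the membership test is $O(1)$, so this phase costs $O(\Delta|V|)$. The second double loop performs $O(\Delta|E|)$ edge visits, each of constant cost except for the test ``$(c_i(x),c_i(y))\notin E(H_i)$'' — and realising this test within the stated budget is the only place that needs care, so I expect it to be the main obstacle. The clean observation is that, by assertion~(2), a fixed edge $(x,y)$ has a single colour $r$ and therefore contributes a pair only to $H_r$, so the candidate pairs produced for all factors together number at most $|E|$; collecting them per factor and deleting duplicates by a radix sort on integer pairs bounded by $|V|$ costs $O(|E|+\Delta|V|)$. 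Using $|V|=O(|E|)$ (as is standard here; isolated vertices, if any, are coordinatized at no extra asymptotic cost), the whole of Algorithm~\ref{alg:Embedding} runs in $O(|E|\Delta)$ time, which completes the proof.
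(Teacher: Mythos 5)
Your proof is correct and takes essentially the same route as the paper's: the same map $\gamma(x)=c(x)$, injectivity from guarantee (1) of Lemma~\ref{lem:alg:findH}, the homomorphism property from the one-coordinate-difference guarantee (2), and a per-loop cost count giving $O(|E|\Delta)$. Your added care about the bound $r\le\Delta$ and the implementation of the edge-membership test are refinements the paper glosses over, not a different approach.
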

\begin{proof}
	After running Algorithm \ref{alg:findH} we obtain a graph $G=(V,E)$ such
	that vertices $x\in V$ have consistent coordinates $c(x) = (c_1(x),
	\dots, c_r(x))$, i.e, no two vertices of $G$ have identical coordinates
	and adjacent vertices $x$ and $y$ with $(x,y)\in\varphi_i$ differ only in
	the $i$-th position of their coordinates.

  We first compute empty graphs $H_1,\dots,H_r$ and add for each vertex
  $x$ and for each $c_i(x)$ of its coordinates $c(x) = (c_1(x),\dots
  c_r(x))$ the vertex $c_i(x)$ to $H_i$. Different vertices $c_i(x)$ and
  $c_i(y)$ are connected in $H_i$ whenever there is an edge $(x,y)\in E$.
  We define a map $\gamma:V(G)\rightarrow V(H)$ with $x \mapsto c(x)$.
  Since no two vertices of $G$ have identical coordinates $\gamma$ is
  injective. Furthermore, since adjacent vertices $x$ and $y$ that differ
  only in one, say the $i$-th, position of their coordinates are mapped to
  the edge $(c_i(x), c_i(y))$ contained in factor $H_i$ and by definition
  of the Cartesian product, we can conclude that the map $\gamma$ is a
  homomorphism and hence, an embedding of $G$ into $H$.

    The first two for-loops run $|V|\Delta$ times, that is $O(|E|)$. The
  second two for-loops run $|E|\Delta$ times, hence we end in overall time
  complexity of $O(|E|\Delta)$.
\end{proof}

To complete the paper, we explain how the last algorithms, in particular,
Algorithm \ref{alg:LocalRW}, \ref{alg:findH} and \ref{alg:Embedding} can be
used as suitable heuristics to find approximate products; see also Figures
\ref{fig:ExmplAlgo} and \ref{fig:ExmplAlgo2}. Note, by Corollary
\ref{cor:algoLocalRW} Algorithm \ref{alg:LocalRW} can be used to compute
$\delta(G)^*$. However, most graphs are prime and $\delta(G)^*$ would
consist only of one equivalence class. Thus we are interested in subsets of
$\delta(G)^*$ which provide enough information of large factorizable or
``into non-trivial Cartesian product embeddable'' subgraphs. This can be
achieved by ignoring regions $S_v$ where $\R_{|S_v}$ has only one or less
than a given threshold number of equivalence classes. Hence, only subsets
$W\subseteq V$ where $\R_{|S_v}(W)^*$ has a sufficiently large number of
equivalence classes are of interest. For this, we would cover a graph by
starting at some vertex $v\in V$, compute $S_v$ and $\R_{|S_v}$, and check if
$\R_{|S_v}$ has the desired number of equivalence classes; see Figure
\ref{fig:Labelname1}. If not, we take another vertex $w\in V$ and repeat
this procedure with $w$. If $\R_{|S_v}$ has the desired number of
equivalence classes we would take a neighbor $w$ of $v$, compute $S_w$ and
$\R_{|S_w}$ and check whether $(\R_{|S_w} \cup \R_{|S_v})^*$ has the desired number
of equivalence classes. If so, then we continue with neighbors of $v$ and $w$ and
to extend the regions that can be embedded into a Cartesian product.
To find such regions one can easily adapt Algorithms \ref{alg:findH} and
\ref{alg:Embedding}.

Note, after running Algorithm \ref{alg:LocalRW} one could take out one of
largest connected component of each equivalence class induced by edges with
the respective ``colors'' to obtain putative factors; see Figure
\ref{fig:Labelname2}. However, even knowing putative factors does not
yield information about which edges
should be added or deleted to obtain a
product graph. For this, coordinates are necessary. They can be computed by
Algorithm \ref{alg:findH} and used as input for Algorithm \ref{alg:Embedding};
see Figure \ref{fig:ExmplAlgo2}.

Finally, even the most general methods for computing approximate strong products only compute
a (partial) product coloring of the graphs $G$ under investigation. They yield putative factors,
but no coordinatization \cite{hellmuth2011local}. However, Algorithm \ref{alg:Embedding}
can be adapted to find the coordinates of the so-called
underlying approximate Cartesian skeleton of such graphs, and can thus be used
to find an embedding of (the approximate strong product) $G$
into a non-trivial strong product graph.

\bibliographystyle{plain}
\bibliography{lib}

\end{document}